\newtheorem{theorem}{Theorem}[section]
\newtheorem{prop}[theorem]{Proposition}
\newtheorem{lemma}[theorem]{Lemma}
\newtheorem{defn}[theorem] {Definition}
\newtheorem{rem}[theorem] {Remark}
\newtheorem{note}[theorem] {Note}
\newtheorem{corollary}[theorem]{Corollary}
\newtheorem{Notation}[subsection]{Notation}
\newtheorem{Main result:}[theorem]  {Main result:}
\newlist{Properties}{enumerate}{10}
\setlist[Properties]{label*=\arabic*}
\begin{document}
\title{On a family of a linear maps from $M_{n}(\mathbb{C})$ to $M_{n^{2}}(\mathbb{C})$}
\author{Benoit Collins, Hiroyuki Osaka, Gunjan Sapra}
\maketitle
\begin{abstract}
Bhat \nocite{*}\cite{bhat2011} characterizes the family of linear maps defined on $B(\mathcal{H})$ which preserve unitary conjugation. We generalize this idea and study the maps with a similar equivariance property on finite-dimensional matrix algebras. We show that the maps with equivariance property are significant to study $k$-positivity of linear maps defined on finite-dimensional matrix algebras.\\
In \cite{choi1072positive} Choi showed that $n$-positivity is different from $(n-1)$-positivity for the linear maps defined on $n$ by $n$ matrix algebras.\\
In this paper, we present a parametric family of linear maps $\Phi_{\alpha, \beta,n} : M_{n}(\mathbb{C}) \rightarrow M_{n^{2}}(\mathbb{C})$ and study the properties of positivity, completely positivity, decomposability etc. We determine values of parameters $\alpha$ and $\beta$ for which the family of maps $\Phi_{\alpha, \beta,n}$ is positive for any natural number $n \geq 3$. We focus on the case of $n=3,$ that is, $\Phi_{\alpha, \beta,3}$ and study the properties of $2$-positivity, completely positivity and decomposability. In particular, we give values of parameters $\alpha$ and $\beta$ for which the family of maps $\Phi_{\alpha, \beta,3}$ is $2$-positive and not completely positive. \\
\vspace{2mm}
\noindent \textbf{Keywords:} Equivariance property, $k$-positivity, decomposability.
\end{abstract}

		
\section{Introduction}
Recently, positive linear maps on matrix algebras have become important since they are widely used to detect entanglement in Quantum information theory \cite{Book1,Horodecki:1996nc}. 
Unfortunately, the structure of positive linear maps is not clearly understood. In this regard, it becomes important to study whether a positive linear map can be written as a sum of other simple maps, that is, whether a positive linear map is decomposable. Positive maps which can not be written in simple form (Indecomposable and atomic maps) are used to detect PPT entangled states in Quantum information theory\cite{STORMER20082303,Horodecki:1996nc}.
 Therefore, it is important to distinguish $k$-positivity from $(k-1)$-positivity of linear maps defined on finite-dimensional matrix algebras. Choi \cite{choi1072positive} gave an example of a linear map defined on $M_{n}(\mathbb{C})$ for $n \geq 3,$ which is $(n-1)$-positive but not completely positive. In this paper, we present a family of linear maps from $M_{n}(\mathbb{C})$ to   $M_{n^{2}}(\mathbb{C})$ to distinguish $k$-positivity from $(k-1)$-positivity for $2 < k \leq n$. First, we recall the basic definitions and theorems.\\ 
Let $\mathcal{A}$ be a $C^{*}$-algebra. Elements of the form $\{x^{*}x | x \in \mathcal{A}\}$ are called positive elements of $\mathcal{A}$. The set of all such elements is denoted by $\mathcal{A}^{+}.$
It is known that
\begin{equation*}
	\mathcal{A}^{+} = \{a \in \mathcal{A} |\hspace{1.5mm}  a=a^{*} \text{and} \hspace{1.5mm} Spec(a) \subseteq [0,\infty)\},
\end{equation*}
where $Spec(a)$ denotes the spectrum of $a \in \mathcal{A}.$\\
Throughout this paper, we focus on $\mathcal{A}= M_{d}(\mathbb{C})$, the space of all $d$ by $d$ matrices with complex entries.
\begin{Notation}
$i_{d}$ denotes the identity map on $M_{d}(\mathbb{C})$ and $\mathds{1}_{d}$ denotes the identity matrix in $M_{d}(\mathbb{C})$.
$B(\mathcal{H})$ denotes the space of all bounded linear operators on a Hilbert space $\mathcal{H}$ and $B(M_{d_{1}}(\mathbb{C}),M_{d_{2}}(\mathbb{C}))$ denotes the space of all bounded linear maps from $M_{d_{1}}(\mathbb{C})$ to $M_{d_{2}}(\mathbb{C})$.\\
We denote $B_{d}= \sum_{i=1}^{d}|e_{i}\rangle \otimes |e_{i}\rangle$, the unnormalized maximally entangled $Bell$ vector in $\mathbb{C}^{d} \otimes \mathbb{C}^{d}$ and $B_{d^{2}}=B_{d}B_{d}^{*}.$	\\
$A^{t}$ and $\mathrm{Tr}(A)$ denote the transpose and trace of a matrix $A \in M_{d}(\mathbb{C})$ respectively. Let $T_{d}$ denote the transpose map on $M_{d}(\mathbb{C}) \hspace{1.5mm} i.e \hspace{1.5mm} T_{d}(A)=A^{t}$ for $A \in M_{d}(\mathbb{C})$.\\	
Let $(e_{ij})$ be the \emph{matrix units} of $M_{d}(\mathbb{C}),$ that is, $e_{ij}e_{kl}=\delta_{jk}e_{il}$ for $1 \leq i,j,k,l \leq d,\hspace{0.5mm} e_{ij}^{*}=e_{ji}$ and $\sum_{i=1}^{d}e_{ii}=\mathds{1}_{d}.$\\
\end{Notation}
\begin{defn} 
A matrix $A\in M_{d}(\mathbb{C})$ is called \emph{positive semi-definite} if it is hermitian and all its eigenvalues are positive. It is denoted as $A \geq 0.$ The set of all positive semi-definite matrices in $M_{d}(\mathbb{C})$ is denoted by $M_{d}(\mathbb{C})^{+}$.\\
A linear map $\Phi: M_{d_{1}}(\mathbb{C}) \rightarrow M_{d_{2}}(\mathbb{C})$ is called \emph{positive} if it maps positive elements of $M_{d_{1}}(\mathbb{C})$ to positive elements of $M_{d_{2}}(\mathbb{C}).$ It is called \emph{copositive} if $\Phi \circ T_{d_{1}}$ is positive. $\Phi$ is called \emph{$k$-positive} for $k \in \mathbb{N}$ if the linear  map $ i_{k} \otimes
\Phi : M_{k}(\mathbb{C}) \otimes M_{d_{1}}(\mathbb{C})  \rightarrow M_{k}(\mathbb{C}) \otimes M_{d_{2}}(\mathbb{C})$ is positive. It is called \emph{$k$-copositive} if $\Phi \circ T_{d_{1}}$ is $k$-positive.
\end{defn}


\begin{defn}
A linear map $\Phi$ is called \emph{completely positive} (resp. \emph{completely copositive}) if $\Phi$ (resp. $\Phi \circ T_{d_{1}})$ is $k$-positive (resp. $k$-copositive) for every $k \in \mathbb{N}.$	
\end{defn}


In the study of completely positive maps, the Choi matrix is very important. We define the Choi matrix of a linear map and study the relation between linear maps and its Choi matrix. 

\begin{defn}
Let $\Phi: M_{d_{1}}(\mathbb{C}) \rightarrow M_{d_{2}}(\mathbb{C})$ be a linear map. Then the Choi matrix of $\Phi$ is defined as:
$$C_{\Phi} := \sum_{i,j=1}^{d_{1}}e_{ij} \otimes \Phi(e_{ij}). $$	
\end{defn}
With the help of Choi matrix, we can define a correspondence between the spaces $B(M_{d_{1}}(\mathbb{C}), M_{d_{2}}(\mathbb{C}))$ and  $M_{d_{1}}(\mathbb{C})\otimes M_{d_{2}}(\mathbb{C})$.\\
Define $\Sigma:B(M_{d_{1}}(\mathbb{C}), M_{d_{2}}(\mathbb{C})) \rightarrow M_{d_{1}}(\mathbb{C})\otimes M_{d_{2}}(\mathbb{C})$ given by $$\Phi \mapsto C_{\Phi}.$$
This map is obviously linear and bijective. This isomorphism is known as the Choi-Jamiolkowski isomorphism \cite{JAMIOLKOWSKI1972275}.
		
\begin{theorem} \label{CPresult}\cite[Theorem 4.1.8]{størmer1963} 
Let $\Phi: M_{d_{1}}(\mathbb{C}) \rightarrow M_{d_{2}}(\mathbb{C})$ be a linear map. Then the following conditions are equivalent.
\begin{enumerate}
\item $\Phi$ is completely positive.
\item $C_{\Phi} \geq 0.$
\item $\Phi(A)=\sum_{i=1}^{k}V_{i}^{*}AV_{i}$ with $V_{i}:M_{d_{2}}(\mathbb{C})\rightarrow M_{d_{1}}(\mathbb{C})$, linear map and \\ $k \leq \min\{{d_{1}}{d_{2}}\}.$
\end{enumerate} 
\end{theorem}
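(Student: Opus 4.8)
The plan is to prove the classical Choi characterization by establishing the cyclic chain $(1)\Rightarrow(2)\Rightarrow(3)\Rightarrow(1)$, using the notation $B_{d_1}=\sum_i e_i\otimes e_i$ and $B_{d_1^2}=B_{d_1}B_{d_1}^*$ already introduced above.

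\emph{Step 1: $(1)\Rightarrow(2)$.} First I would observe the identity $B_{d_1^2}=B_{d_1}B_{d_1}^*=\sum_{i,j=1}^{d_1} e_{ij}\otimes e_{ij}$, so that $C_{\Phi}=(i_{d_1}\otimes\Phi)(B_{d_1^2})$. Since $B_{d_1^2}$ has the form $xx^*$ it is positive semi-definite, and since $\Phi$ is completely positive the map $i_{d_1}\otimes\Phi$ is positive; applying it to $B_{d_1^2}\geq 0$ yields $C_{\Phi}\geq 0$.

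\emph{Step 2: $(2)\Rightarrow(3)$.} Viewing $C_{\Phi}\in M_{d_1}(\mathbb{C})\otimes M_{d_2}(\mathbb{C})\cong M_{d_1 d_2}(\mathbb{C})$, positivity gives a spectral decomposition $C_{\Phi}=\sum_{l=1}^{k} v_l v_l^{*}$ with $v_l\in\mathbb{C}^{d_1}\otimes\mathbb{C}^{d_2}$ and $k=\operatorname{rank}(C_{\Phi})\leq d_1 d_2$. I would then identify each $v_l$ with a linear map $V_l$ (a $d_1\times d_2$ matrix) through $\langle e_a, V_l e_b\rangle=\langle v_l, e_a\otimes e_b\rangle$, and compute the $(b,b')$-entry of $\sum_l V_l^{*}e_{ij}V_l$, matching it against $\langle e_i\otimes e_b,\, C_{\Phi}\,(e_j\otimes e_{b'})\rangle$, which by definition of the Choi matrix equals the $(b,b')$-entry of $\Phi(e_{ij})$. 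This gives $\Phi(e_{ij})=\sum_{l=1}^{k}V_l^{*}e_{ij}V_l$ on matrix units, and linearity of both sides extends the formula to all $A\in M_{d_1}(\mathbb{C})$.

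\emph{Step 3: $(3)\Rightarrow(1)$.} For a single elementary map $\Psi_V(A)=V^{*}AV$ and any $m\in\mathbb{N}$, the map $i_m\otimes\Psi_V$ is conjugation by $\mathds{1}_m\otimes V$, i.e.\ $X\mapsto(\mathds{1}_m\otimes V)^{*}X(\mathds{1}_m\otimes V)$, which sends positive semi-definite matrices to positive semi-definite matrices; hence $\Psi_V$ is completely positive. A finite sum of completely positive maps is completely positive, so $\Phi(A)=\sum_{i=1}^{k}V_i^{*}AV_i$ is completely positive. I expect the only delicate point to be the bookkeeping in Step 2 — getting the conjugations in the vector-to-operator identification consistent with the paper's conventions and verifying the rank bound $k\leq d_1 d_2$ — while Steps 1 and 3 are short and formal.
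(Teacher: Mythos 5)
Your argument is correct and complete. Note that the paper itself offers no proof of this statement: it is quoted as a known result (Choi's theorem, cited to St{\o}rmer's book), so there is nothing internal to compare against. Your cyclic chain $(1)\Rightarrow(2)\Rightarrow(3)\Rightarrow(1)$ is the standard textbook argument: the identity $C_{\Phi}=(i_{d_{1}}\otimes\Phi)(B_{d_{1}^{2}})$ with $B_{d_{1}^{2}}=B_{d_{1}}B_{d_{1}}^{*}\geq 0$ gives $(1)\Rightarrow(2)$; the spectral decomposition of $C_{\Phi}$ together with the vector--operator identification gives the Kraus form with $k=\operatorname{rank}(C_{\Phi})\leq d_{1}d_{2}$ (which is what the statement's $\min\{d_{1}d_{2}\}$ must mean); and $(3)\Rightarrow(1)$ follows since $i_{m}\otimes\Psi_{V}$ is conjugation by $\mathds{1}_{m}\otimes V$. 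The bookkeeping in Step 2 does close up: writing $v_{l}=\sum_{a,b}c^{l}_{ab}\,e_{a}\otimes e_{b}$ and $(V_{l})_{ab}=\overline{c^{l}_{ab}}$, one checks $(V_{l}^{*}e_{ij}V_{l})_{bb'}=c^{l}_{ib}\overline{c^{l}_{jb'}}$, which is exactly the $(b,b')$-entry of the $(i,j)$ block of $v_{l}v_{l}^{*}$, so the formula holds on matrix units and extends by linearity.
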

\textbf{Main result:}
In this paper, we define a family of linear maps $\Phi_{\alpha,\beta,n}: M_{n}(\mathbb{C}) \rightarrow M_{n}(\mathbb{C}) \otimes M_{n}(\mathbb{C})$, where $n \geq 3$ is any natural number, $\alpha$ and $\beta$ are real numbers. We give values of parameters $\alpha$ and $\beta$ for which the family of maps $\Phi_{\alpha,\beta,3}$ is $2$-positive and not completely positive. Moreover, the values of parameters $\alpha$ and $\beta$ for which the family of maps $\Phi_{\alpha, \beta,n}$ is $(k-1)$-positive and not $k$-positive for natural numbers $n > 3$ and $2 < k \leq n$ can easily be deduced from Proposition \ref{k-positivity}.\\ 
This family of maps has some interesting equivariance properties. This equivariance property is significant to prove $k$-positivity of linear maps defined on finite-dimensional matrix algebras. The family of maps is defined as follows:
 	
	\begin{defn}
		Let $\alpha$ and $\beta$ be two real numbers and $n \geq 3.$ Then the family of maps, \\ $$\Phi_{\alpha, \beta,n}:M_{n}(\mathbb C) \rightarrow M_{n}(\mathbb C) \otimes M_{n}(\mathbb C) $$ defined by $$A \mapsto A^{t} \otimes \mathds{1}_{n}+ \mathds{1}_{n} \otimes A + \mathrm{Tr}(A)(\alpha \mathds{1}_{n^{2}}+\beta B_{n^{2}}).
		$$ 	\end{defn}
This paper is organized as follows: In Section 2, we state a proposition which is a necessary and sufficient condition for a linear map to be $k$-positive, provided it satisfies the equivariance property. In section 3, we give values of parameters $\alpha$ and $\beta$ for which the family of maps $\Phi_{\alpha, \beta,n}$ is positive. Later, we focus on the case of $\Phi_{\alpha, \beta,3}$ and study the properties of $2$-positivity, completely positivity and decomposability. We give values of parameters $\alpha$ and $\beta$ for which the family of maps $\Phi_{\alpha, \beta,3}$ is $2$-positive and not completely positive. By Proposition \ref{k-positivity}, the condition of being $(k-1)$-positive and not $k$-positive can easily be generalized for any natural numbers $n > 3$ and $2 < k \leq n.$

\section{Equivariance and Positivity} 
In this section, we give a criterion which is a necessary and sufficient condition for a linear map to be $k$-positive. We will show that equivariance is a useful property to conclude $k$-positivity of linear maps defined on finite-dimensional matrix algebras. We start with a particular case of positivity i.e. for $k=1$ and later generalize it for any $k \in \mathbb{N}.$ In this regard, maps with equivariance property are of special interest.


 \subsection{$1$-positivity}
\begin{lemma} \label{mainlemma}
Let $\Phi:M_{d_{1}}(\mathbb C) \rightarrow M_{d_{2}}(\mathbb C)$ be a linear map such that
\begin{enumerate}
\item  \label{1}For every unitary matrix $\hspace{0.5mm}U \in M_{d_{1}}(\mathbb C),$ there exists a matrix $V \in M_{d_{2}}(\mathbb C)$(not \\ necessarily unitary) such that for every $X \in M_{d_{1}}(\mathbb C).$ we have, \begin{center}
	 $\Phi(UXU^{*})=V \Phi(X) V^{*}.$
\end{center} 
\item $\Phi(e_{11}) \geq 0,$ where $e_{11}$ is a matrix unit in $M_{d_{1}}(\mathbb C)$.
\end{enumerate}
Then $\Phi$ is positive.
\end{lemma}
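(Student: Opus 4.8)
The plan is to combine the spectral theorem with the two hypotheses, the point being that every positive semi-definite matrix in $M_{d_{1}}(\mathbb C)$ is a non-negative linear combination of unitary conjugates of the single matrix unit $e_{11}$. Once this is established, positivity of $\Phi(A)$ drops out of hypothesis \eqref{1} and hypothesis (2), using only the elementary fact that $X \mapsto VXV^{*}$ maps $M_{d_{2}}(\mathbb C)^{+}$ into itself for \emph{any} matrix $V$, unitary or not.

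Concretely, I would first diagonalize: given $A \geq 0$, write $A = UDU^{*}$ with $U \in M_{d_{1}}(\mathbb C)$ unitary and $D = \sum_{i=1}^{d_{1}} \lambda_{i} e_{ii}$, $\lambda_{i} \geq 0$. Then $A = \sum_{i=1}^{d_{1}} \lambda_{i}\, U e_{ii} U^{*}$, so by linearity $\Phi(A) = \sum_{i=1}^{d_{1}} \lambda_{i}\, \Phi(U e_{ii} U^{*})$, and it suffices to prove $\Phi(U e_{ii} U^{*}) \geq 0$ for each $i$ and each unitary $U$. The next step is to notice that each $e_{ii}$ is a unitary conjugate of $e_{11}$: if $P_{1i}$ is the permutation matrix implementing the transposition $(1\;i)$, then $P_{1i} e_{11} P_{1i}^{*} = e_{ii}$, hence $U e_{ii} U^{*} = (U P_{1i}) e_{11} (U P_{1i})^{*}$ with $U P_{1i}$ unitary.

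Applying hypothesis \eqref{1} with the unitary $U P_{1i}$ then yields a matrix $V \in M_{d_{2}}(\mathbb C)$ with $\Phi(U e_{ii} U^{*}) = V\, \Phi(e_{11})\, V^{*}$, which is positive semi-definite since $\Phi(e_{11}) \geq 0$ by hypothesis (2). Summing over $i$ against the non-negative weights $\lambda_{i}$ shows $\Phi(A) \geq 0$, so $\Phi$ is positive.

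I do not expect a genuine obstacle here; the only subtlety to keep in mind is that the intertwiner $V$ provided by \eqref{1} need not be unitary, but this is harmless because $VXV^{*} \geq 0$ whenever $X \geq 0$. It is also worth remarking that the hypothesis $\Phi(e_{11}) \geq 0$ could be weakened to $\Phi(P) \geq 0$ for a single rank-one projection $P$ (equivalently to $\Phi(e_{ii}) \geq 0$ for one index $i$), since all rank-one projections in $M_{d_{1}}(\mathbb C)$ are mutually unitarily conjugate; the same argument goes through verbatim.
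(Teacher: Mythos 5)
Your proof is correct and follows essentially the same route as the paper: reduce to unitary conjugates of $e_{11}$ (the paper phrases this as reducing to rank-one projections, leaving the spectral decomposition implicit, while you spell it out), then apply the equivariance property together with the fact that $X \mapsto VXV^{*}$ preserves positivity for arbitrary $V$. Your closing remark that the hypothesis could be stated for any single rank-one projection is also accurate.
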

 Property (\ref{1}) in above lemma is known as \emph{equivariance property}.
\begin{proof}
It is enough to prove that $\Phi(p) \in M_{d_{2}}(\mathbb C)^{+}$ for any rank one projection $p \in M_{d_{1}}(\mathbb C).$\\
We will use the fact that two hermitian matrices are unitarily equivalent if and only if they have the same eigenvalues (counting the multiplicity). Hence, $p$ and $e_{11}$ are unitarily equivalent.\\
Therefore, there exists a unitary matrix $U \in M_{d_{1}}(\mathbb C)$ such that $p = Ue_{11}U^{*}.$
\begin{align*}
\Phi(p)  &  =\Phi(Ue_{11}U^{*})\\
& =  V\Phi(e_{11})V^{*} \hspace{1mm} \text{(by equivariance property of} \hspace{1mm} \Phi). \\
\end{align*}
Given that $\Phi(e_{11}) \geq 0$ which implies $V\Phi(e_{11})V^{*} \geq 0.$\\
Therefore, $\Phi(p) \geq 0$ and $\Phi$ is positive. 
\end{proof}
\subsection{$k$-positivity}
\begin{prop}\label{k-positivity}
Let $\Phi:{M}_{d_{1}}(\mathbb{C}) \rightarrow M_{d_{2}}(\mathbb{C})$ be a self adjoint linear map such that for every unitary matrix $U \in M_{d_{1}}(\mathbb C),$ there exists a matrix $V \in M_{d_{2}}(\mathbb C)$(not \\ necessarily unitary) such that for every $X \in M_{d_{1}}(\mathbb C).$ we have, \\ $$\Phi(UXU^{*})=V \Phi(X) V^{*}.$$ Then, for $k \leq min\{d_{1},d_{2}\},\hspace{0.5mm} \Phi$ is $k$-positive if and only if the block matrix $[\Phi(e_{ij})]_{i,j=1}^{k}$
is positive, where $(e_{ij})$ are matrix units in $M_{d_{1}}(\mathbb C).$ 
\end{prop}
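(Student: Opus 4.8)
The plan is to prove the two implications separately; the forward implication is short, and the reverse one is where the equivariance hypothesis does the work.

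For ``$\Phi$ $k$-positive $\Rightarrow$ block matrix positive'', I would evaluate $i_k\otimes\Phi$ on the (rank-one, hence positive) matrix $P=\sum_{i,j=1}^k E_{ij}\otimes e_{ij}\in M_k(\mathbb{C})\otimes M_{d_1}(\mathbb{C})$, where $(E_{ij})$ are the matrix units of $M_k(\mathbb{C})$ and $(e_{ij})$ those of $M_{d_1}(\mathbb{C})$ (note $P=|\Omega\rangle\langle\Omega|$ with $|\Omega\rangle=\sum_{i=1}^k|e_i\rangle\otimes|e_i\rangle$). Since $(i_k\otimes\Phi)(P)=\sum_{i,j=1}^k E_{ij}\otimes\Phi(e_{ij})$ is by definition the block matrix $[\Phi(e_{ij})]_{i,j=1}^k$, its positivity is immediate from $k$-positivity.

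For the converse I would use that a positive element of $M_k(\mathbb{C})\otimes M_{d_1}(\mathbb{C})$ is a sum of rank-one projections, so by linearity of $i_k\otimes\Phi$ it suffices to check $(i_k\otimes\Phi)(|\xi\rangle\langle\xi|)\ge 0$ for an arbitrary unit vector $|\xi\rangle\in\mathbb{C}^k\otimes\mathbb{C}^{d_1}$. Next I would take the Schmidt decomposition $|\xi\rangle=\sum_{l=1}^k\sqrt{\lambda_l}\,|f_l\rangle\otimes|g_l\rangle$ with $\lambda_l\ge 0$, $\{|f_l\rangle\}_{l=1}^k$ an orthonormal basis of $\mathbb{C}^k$ and $\{|g_l\rangle\}_{l=1}^k$ orthonormal in $\mathbb{C}^{d_1}$; because $k\le d_1$ I may extend $\{|g_l\rangle\}$ to an orthonormal basis of $\mathbb{C}^{d_1}$, so there is a unitary $U\in M_{d_1}(\mathbb{C})$ with $U|e_l\rangle=|g_l\rangle$, and hence $U e_{lm}U^*=|g_l\rangle\langle g_m|$ for $l,m\le k$. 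Let $V\in M_{d_2}(\mathbb{C})$ be the matrix furnished by the equivariance hypothesis for this $U$, let $W\in M_k(\mathbb{C})$ be the unitary with $W|e_l\rangle=|f_l\rangle$, and let $D=\mathrm{diag}(\sqrt{\lambda_1},\dots,\sqrt{\lambda_k})$. Expanding $|\xi\rangle\langle\xi|$, applying $i_k\otimes\Phi$, replacing $\Phi(|g_l\rangle\langle g_m|)=\Phi(Ue_{lm}U^*)=V\Phi(e_{lm})V^*$, and rewriting $\sqrt{\lambda_l\lambda_m}\,|f_l\rangle\langle f_m|=WD E_{lm}DW^*$, I expect to arrive at
$$(i_k\otimes\Phi)(|\xi\rangle\langle\xi|)=(WD\otimes V)\,\bigl[\Phi(e_{lm})\bigr]_{l,m=1}^k\,(WD\otimes V)^{*},$$
which is positive since the inner block matrix is positive by assumption.

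The main obstacle — though it is more a matter of care than of difficulty — is getting the last identity right: one must observe that the equivariance relation $\Phi(UXU^*)=V\Phi(X)V^*$ holds for all $X$ with the \emph{same} $V$, so a single conjugation $WD\otimes V$ simultaneously handles every matrix unit $e_{lm}$, and one must correctly fold the Schmidt coefficients and the change-of-basis unitaries into that conjugation. I would also remark that self-adjointness of $\Phi$ is what guarantees $[\Phi(e_{ij})]$ is Hermitian (so that ``positive'' makes sense) and that $i_k\otimes\Phi$ is Hermiticity-preserving; it plays no further role in the computation.
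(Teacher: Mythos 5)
Your proposal is correct and follows essentially the same route as the paper: the forward direction evaluates $i_k\otimes\Phi$ on the maximally entangled projection, and the converse uses the Schmidt decomposition together with the equivariance hypothesis to reduce an arbitrary rank-one projection to the block matrix $[\Phi(e_{ij})]_{i,j=1}^{k}$. The only (harmless) difference is that you absorb the Schmidt coefficients into the conjugating matrix $WD\otimes V$, whereas the paper conjugates by unitaries and handles the coefficients separately as a Schur product with the positive rank-one matrix $(c_ic_j)_{i,j}$.
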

\begin{proof}
Let $\Phi$ be $k$-positive. Then the map \\
$$i_{k} \otimes \Phi:M_{k}(\mathbb C) \otimes  M_{d_{1}}(\mathbb C)\rightarrow M_{k}(\mathbb C) \otimes  M_{d_{2}}(\mathbb C) $$ is positive.
Since the matrix $(e_{ij})_{i,j=1}^{k}$ is positive, where $(e_{ij})$ are matrix units in $M_{d_{1}}(\mathbb C)$. Hence,$$(i_{k} \otimes \Phi)(e_{ij})_{i,j=1}^{k}=[\Phi(e_{ij})]_{i,j=1}^{k}$$ is positive.
			
For the sufficient part, we prove that the map $i_{k} \otimes \Phi$ is positive. It is enough to prove that  $(i_{k} \otimes \Phi)p$ is positive for any rank-one projection $p \in M_{k}(\mathbb C) \otimes  M_{d_{1}}(\mathbb C).$ 
We prove it in two steps:\\
\vspace{2mm}
\underline{Step 1:}
$p=|x \rangle \langle x|$ for $|x\rangle \in \mathbb C^{k} \otimes \mathbb C^{d_{1}}, |||x \rangle || =1.$ \\
Consider the Schmidt decomposition of $|x\rangle$, we get 
\begin{align}
|x\rangle  & = (U \otimes V^{*}) \big(\sum_{i=1}^{k}c_{i} |e_{i} \rangle \otimes |e_{i}\rangle),\label{number1}
\end{align}

 where $c_{i}$'s are square root of eigenvalues of rank-one projection $|x \rangle \langle x|.$\\ Hence, $c_{i}$'s are positive real numbers with $\sum_{i=1}^{k}c_{i}^{2}=1.$\\ In expression (\ref{number1}), $U \in M_{k}(\mathbb C)$ and $V \in M_{d_{1}}(\mathbb C)$ are unitaries arising from \\ singular value decomposition of the matrix $C \in M_{k,d_{1}}(\mathbb C)$, where $C$ is the \\ coefficient matrix of $|x \rangle,$ when written in the canonical basis of  $\mathbb C^{k} \otimes \mathbb C^{d_{1}}$.
 \begin{align}\label{number2}
p & =|x \rangle \langle x| = (U \otimes V^{*})(\sum_{i,j=1}^{k}c_{i}c_{j}e_{ij} \otimes e_{ij})(U \otimes V^{*})^{*}.
 \end{align}

In expression (\ref{number2}), first component $e_{ij}$ of tensor product are matrix units in $M_{k}(\mathbb C)$ and second component are matrix units in $M_{d_{1}}(\mathbb C).$ \\  
$$p=(U \otimes V^{*})X(U \otimes V^{*})^{*}$$ where $X= \sum_{i,j=1}^{k}c_{i}c_{j}e_{ij}  \otimes e_{ij}.$ 
It can be easily seen that if $\Phi$ satisfy the equivariance property on $M_{d_{1}}(\mathbb C),$ then  $i_{k}\otimes \Phi$ also satisfy the equivariance property on $M_{k}(\mathbb C) \otimes M_{d_{1}}(\mathbb C),$ if unitaries are coming from $U_{k}(\mathbb C) \otimes U_{d_{1}}(\mathbb C)$, where $U_{m}(\mathbb C)$ denotes the set of all unitaries in $M_{m}(\mathbb C)$ for any natural number $m.$
			
By equivariance property of $i_{k}\otimes \Phi,$ there exist matrices $V_{1}$ and $V_{2}$ in $M_{k}(\mathbb C)$ and $M_{d_{2}}(\mathbb C)$ respectively such that 
\begin{align}
(i_{k}\otimes \Phi)p & =(i_{k}\otimes \Phi)[(U\otimes V^{*})X (U\otimes V^{*})^{*}]\\
& =(V_{1}\otimes V_{2})(i_{k}\otimes \Phi)X (V_{1}\otimes V_{2})^{*}.\label{number3} 
\end{align}
In fact $V_{1}=U.$\\
\underline{Step 2:}
In this step, we prove that $(i_{k}\otimes \Phi)p$ is positive if the block matrix $[\Phi(e_{ij})]_{i,j=1}^{k}$ is positive.
			
Let the block matrix $[\Phi(e_{ij})]_{i,j=1}^{k}$ be positive. First, we prove that $(i_{k}\otimes \Phi)X$ is positive.
Indeed,
\begin{align*}
(i_{k}\otimes \Phi)X & = (i_{k}\otimes \Phi) \bigg(\sum_{i,j=1}^{k}c_{i}c_{j}e_{ij}  \otimes e_{ij} \bigg) \\
& = [\Phi(c_{i}c_{j}e_{ij})]_{i,j=1}^{k}\\
& = [c_{i}c_{j}\Phi(e_{ij})]_{i,j=1}^{k}\\
& = (c_{i}c_{j})_{i,j=1}^{k} \circ [\Phi(e_{ij})]_{i,j=1}^{k}.
\end{align*}
			
The last term being the Schur product of two positive matrices is positive. Hence, $(i_{k}\otimes \Phi)X$ is positive.\\
By expression (\ref{number3}), $(i_{k}\otimes \Phi)p$ is positive if the block matrix $[\Phi(e_{ij})]_{i,j=1}^{k}$ is positive. \\
Therefore, we conclude that $\Phi$ is $k$-positive. 
\end{proof}
\section{Properties of linear map $\Phi_{\alpha, \beta, n}$}
\subsection{Positivity}
 \begin{prop} \label{Positivity}The linear map $\Phi_{\alpha, \beta,n}\hspace{1mm} (n \geq 3)$ is positive if and only if the following conditions hold.

\begin{enumerate}
\item $\alpha \in [0, \infty)$ if $\beta \geq 0.$
\item $\alpha \in \bigg[\frac{-(2+n\beta)+ \sqrt{n^{2}\beta^{2}-4(n-2)\beta+4}}{2} , \infty \bigg)$ if $\beta \leq 0.$
\end{enumerate} 
\end{prop}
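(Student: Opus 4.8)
\textbf{Step 1 (reduce to $\Phi_{\alpha,\beta,n}(e_{11})\ge 0$).} The plan is to first verify that $\Phi_{\alpha,\beta,n}$ has the equivariance property of Lemma~\ref{mainlemma} and then invoke that lemma. Given a unitary $U\in M_n(\mathbb C)$, I would take $V=\overline{U}\otimes U$, which is unitary. Using $(UXU^{*})^{t}=\overline{U}\,X^{t}\,\overline{U}^{*}$, the summands $A^{t}\otimes\mathds{1}_n$ and $\mathds{1}_n\otimes A$ transform as required, while for the last summand one only needs $V\mathds{1}_{n^2}V^{*}=\mathds{1}_{n^2}$ and $V B_{n^2}V^{*}=B_{n^2}$. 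The latter follows from the intertwining identity $(M\otimes\mathds{1}_n)B_n=(\mathds{1}_n\otimes M^{t})B_n$: applying it with $M=U$ gives $(\overline{U}\otimes U)B_n=(\overline{U}\,U^{t}\otimes\mathds{1}_n)B_n=B_n$, hence $V B_{n^2}V^{*}=B_n B_n^{*}=B_{n^2}$. Therefore $\Phi_{\alpha,\beta,n}(UXU^{*})=V\Phi_{\alpha,\beta,n}(X)V^{*}$, and by Lemma~\ref{mainlemma} positivity of $\Phi_{\alpha,\beta,n}$ is equivalent to $\Phi_{\alpha,\beta,n}(e_{11})\ge 0$ (the converse implication being trivial since $e_{11}\ge 0$).

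\textbf{Step 2 (block decomposition).} Since $e_{11}^{t}=e_{11}$ and $\mathrm{Tr}(e_{11})=1$,
$$\Phi_{\alpha,\beta,n}(e_{11})=e_{11}\otimes\mathds{1}_n+\mathds{1}_n\otimes e_{11}+\alpha\,\mathds{1}_{n^2}+\beta B_{n^2}.$$
Working in the product basis $|e_i\rangle\otimes|e_j\rangle$ of $\mathbb C^n\otimes\mathbb C^n$, I would split the space into subspaces left invariant by this matrix: on $\mathrm{span}\{|e_1\rangle\otimes|e_j\rangle,\ |e_i\rangle\otimes|e_1\rangle:i,j\ge 2\}$ it acts as the scalar $1+\alpha$; on $\mathrm{span}\{|e_i\rangle\otimes|e_j\rangle:i\ne j,\ i,j\ge 2\}$ it acts as the scalar $\alpha$; and on the ``diagonal'' subspace $D=\mathrm{span}\{|e_i\rangle\otimes|e_i\rangle:1\le i\le n\}$, which is $n$-dimensional, it equals $\mathrm{diag}(2+\alpha,\alpha,\dots,\alpha)+\beta J_n$, where $J_n$ is the all-ones matrix. (A dimension count $2(n-1)+(n-1)(n-2)+n=n^2$ confirms these subspaces exhaust everything, and since $n\ge 3$ the $\alpha$-eigenspace is nontrivial.)

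\textbf{Step 3 (the $n\times n$ block).} On $D$, using invariance under permutations of the indices $2,\dots,n$, the matrix $N:=\mathrm{diag}(2+\alpha,\alpha,\dots,\alpha)+\beta J_n$ decomposes further: it acts as $\alpha$ on the $(n-2)$-dimensional subspace $\{(0,x_2,\dots,x_n):\sum_{i\ge 2}x_i=0\}$, and on $\mathrm{span}\{e_1,\ (e_2+\dots+e_n)/\sqrt{n-1}\}$ it is the symmetric matrix
$$\begin{pmatrix} 2+\alpha+\beta & \beta\sqrt{n-1} \\ \beta\sqrt{n-1} & \alpha+\beta(n-1) \end{pmatrix}.$$
Hence $\Phi_{\alpha,\beta,n}(e_{11})\ge 0$ if and only if $\alpha\ge 0$ (which also yields $1+\alpha\ge 0$) and this $2\times 2$ matrix is positive semidefinite, i.e.\ its trace $2+2\alpha+n\beta$ and its determinant $\alpha^{2}+(2+n\beta)\alpha+2(n-1)\beta$ are both nonnegative.

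\textbf{Step 4 (solving the inequalities).} The discriminant of the determinant as a quadratic in $\alpha$ is exactly $n^{2}\beta^{2}-4(n-2)\beta+4$, which is always positive (its own discriminant in $\beta$ equals $-64(n-1)<0$). Writing $\alpha_{\pm}=\tfrac12\bigl(-(2+n\beta)\pm\sqrt{n^{2}\beta^{2}-4(n-2)\beta+4}\,\bigr)$, the determinant is nonnegative precisely when $\alpha\le\alpha_{-}$ or $\alpha\ge\alpha_{+}$; the first branch forces the trace to be negative, whereas on $\alpha\ge\alpha_{+}$ the trace condition is automatic. Thus positivity of $\Phi_{\alpha,\beta,n}$ is equivalent to $\alpha\ge\max\{0,\alpha_{+}\}$, and a short estimate (squaring, using $n\ge 3$) shows $\alpha_{+}\le 0$ when $\beta\ge 0$ and $\alpha_{+}\ge 0$ when $\beta\le 0$, which yields the two stated cases. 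I expect the bookkeeping in the invariant-subspace decomposition of Steps~2--3 and the sign comparison for $\alpha_{+}$ in Step~4 to be the main obstacle; the equivariance check is mechanical and, once the $2\times 2$ block is isolated, the rest is routine linear algebra.
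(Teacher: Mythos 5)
Your proposal is correct and follows essentially the same route as the paper: verify the equivariance property with $V=\overline{U}\otimes U$, invoke Lemma \ref{mainlemma} to reduce positivity to $\Phi_{\alpha,\beta,n}(e_{11})\ge 0$, and then analyze the spectrum of that matrix. The only difference is that you carry out explicitly (via the invariant-subspace decomposition and the $2\times 2$ block) the eigenvalue computation that the paper dispatches with ``by a calculation,'' and your eigenvalues, multiplicities, and the resulting case analysis in $\beta$ all agree with the stated characteristic polynomial and conclusion.
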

\begin{proof}
First, assume that $\Phi_{\alpha, \beta,n}$ is positive. By a calculation, It is easy to see that  $(\lambda- \alpha)^{n(n-2)}(\lambda-(1+\alpha))^{2(n-1)}\bigg[\lambda-\bigg(\alpha-\bigg(\frac{-(2+n\beta) \pm \sqrt{n^{2}\beta^{2}-4(n-2)\beta+4}}{2}\bigg)\bigg)\bigg]$ is the characteristic polynomial of $\Phi_{\alpha, \beta,n}(e_{11})$, which is positive. It is possible only when 
\begin{enumerate}
\item $\alpha \geq 0,$
\item $\alpha \geq \frac{-(2+n\beta)+ \sqrt{n^{2}\beta^{2}-4(n-2)\beta+4}}{2}.$
\end{enumerate}
We observe that above two conditions hold if and only if 
\[
 \begin{cases}
 \alpha \geq 0 &\quad\text{if} \hspace{2mm} \beta \geq 0, \\
 \alpha \geq \frac{-(2+n\beta)+ \sqrt{n^{2}\beta^{2}-4(n-2)\beta+4}}{2}  &\quad\text{if} \hspace{2mm} \beta \leq 0.
\end{cases}
\]
For the sufficient part, we apply Lemma \ref{mainlemma}.\\
Let $U \in M_{n}(\mathbb C)$ be a unitary matrix and $X \in M_{n}(\mathbb C)$ be any matrix. Then we have, 
$$\Phi_{\alpha, \beta,n}(UXU^{*})=\overline{U}X^{t}U^{t} \otimes \mathds{1}_{n}+\mathds{1}_{n}\otimes UXU^{*}+\mathrm{Tr}(X)(\alpha \mathds{1}_{n^{2}}+ \beta B_{n^{2}}).$$
\vspace{1mm}
Choose $V= \overline{U} \otimes U$ and use the fact that $B_{n^{2}}$ commutes with $\overline{U} \otimes U.$ We have,
\begin{align*}
V\Phi_{\alpha, \beta,n}(X)V^{*} &  =\overline{U}\otimes U(X^{t}\otimes \mathds{1}_{n}+\mathds{1}_{n}\otimes X+\mathrm{Tr}(X)(\alpha I_{n^{2}}+ \beta B_{n^{2}}))(\overline{U}\otimes U)^{*}\\
& = \overline{U}\otimes U (X^{t}\otimes \mathds{1}_{n}) U^{t}\otimes U^{*}+\overline{U}\otimes U (\mathds{1}_{n}\otimes X) U^{t}\otimes U^{*}+ \\& \hspace{4mm}  \mathrm{Tr}(X)(\overline{U}\otimes U)(\alpha \mathds{1}_{n^{2}}+ \beta B_{n^{2}})(U^{t} \otimes U^{*})\\
&= \overline{U}X^{t}U^{t} \otimes \mathds{1}_{n}+\mathds{1}_{n}\otimes UXU^{*}+\mathrm{Tr}(X)(\alpha \mathds{1}_{n^{2}}+ \beta B_{n^{2}})\\
&= \Phi_{\alpha, \beta,n}(UXU^{*}).
\end{align*}
\vspace{1mm}
Next, we need to show that the matrix $\Phi_{\alpha, \beta,n}(e_{11})$ is positive. It is clear from the reasoning given in the proof of necessary part, where values of parameters satisfy conditions (1) and (2) of the statement. 
Hence, $\Phi_{\alpha, \beta,n}$ is positive.
\end{proof}
Graphically, the region where $\Phi_{\alpha, \beta,3}$ is positive is shown below. The shaded region with blue colour represents the values of parameters $\alpha$ and $\beta$ where $\Phi_{\alpha, \beta,3}$ is positive.
\pgfplotsset{compat=1.5}
\usepgfplotslibrary{fillbetween}
\usetikzlibrary{patterns}
\begin{figure}[h]
\centering
\begin{tikzpicture}
\begin{axis}[
xmin=-4, xmax=4,
ymin=-4, ymax=4,
axis lines=middle,
xlabel = $\beta$,
ylabel = {$\alpha$},
legend pos= outer north east
]  
\addplot[name path=F,blue,domain={-4:4}] {(-2-3*x+(9*x^2 - 4*x + 4)^0.5)*0.5} node[pos=4, above]{$f$};
\addlegendentry{$\alpha=\frac{-(2+3\beta)+\sqrt{9\beta^{2}-4\beta+4}}{2}$}

\addplot[name path=I,white,domain={-4:5}] {4}node[pos=.1, below]{$$};
\addplot[name path=H,white,domain={-4:5}] {-4}node[pos=.1, below]{$h$};
\addplot[name path=K,black,domain={-4:5}] {0}node[pos=.1, below]{$$};
\addplot[pattern=north west lines, pattern color=blue!50]fill between[of=F and I, soft clip={domain=-1.5:0}];
\addplot[pattern=north west lines, pattern color=blue!50]fill between[of=I and K, soft clip={domain=0:4}];
\end{axis}
\end{tikzpicture}
\caption{Region of Positivity}	
\end{figure}
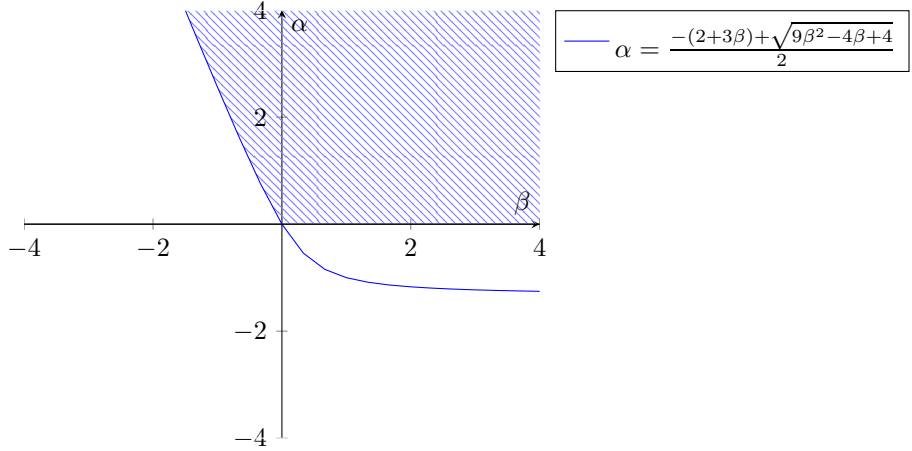
\note
From now on, we focus on the case of $n=3.$ The family of maps $\Phi_{\alpha, \beta,3}$ is denoted as $\Phi[\alpha, \beta]$.
\subsection{Completely Positivity}
\begin{prop}\label{CP}
The map $\Phi[\alpha, \beta]$ is completely positive if and only if it is completely copositive if and only if the following conditions hold.\newpage
\begin{enumerate} 
\item \label{1-condition}$\alpha \in \bigg[\frac{-(3+3\beta)+\sqrt{9\beta^{2}-10\beta+17}}{2}, \infty\bigg)$ if $\beta \leq -0.2$.
\item \label{2-condition} $\alpha \in \bigg [\frac{-(3+3\beta)+\sqrt{9\beta^{2}-10\beta+17}}{2}, \infty \bigg) \cap [1, \infty)$ if $\beta \geq -0.2$.
\end{enumerate}
\end{prop}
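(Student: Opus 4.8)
The plan is to reduce the whole statement to the positivity of a single matrix — the Choi matrix of $\Phi[\alpha,\beta]$ — and to diagonalise that matrix using the same equivariance exploited in Proposition~\ref{Positivity}. By Theorem~\ref{CPresult}, $\Phi[\alpha,\beta]$ is completely positive if and only if $C_{\Phi[\alpha,\beta]}=\sum_{i,j=1}^{3}e_{ij}\otimes\Phi[\alpha,\beta](e_{ij})\geq 0$. First I would write this Choi matrix out: using $e_{ij}^{t}=e_{ji}$ and $\mathrm{Tr}(e_{ij})=\delta_{ij}$, it has the form
$$C_{\Phi[\alpha,\beta]}=\alpha\,\mathds{1}_{27}+F+G+\beta\,H,$$
where, viewing $(\mathbb{C}^{3})^{\otimes 3}$ with the first leg the input ancilla and the last two legs the output, $F=\sum_{i,j}e_{ij}\otimes e_{ji}\otimes\mathds{1}_{3}$ is the flip of the first two legs, $G=\sum_{i,j}e_{ij}\otimes\mathds{1}_{3}\otimes e_{ij}$ is a copy of $B_{9}$ on the legs $1,3$, and $H=\mathds{1}_{3}\otimes B_{9}$ is $B_{9}$ on the legs $2,3$. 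So the question becomes: for which $(\alpha,\beta)$ is $\alpha\geq -\lambda_{\min}(F+G+\beta H)$?

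The core step is the spectral analysis of $F+G+\beta H$. Because $\Phi[\alpha,\beta]$ is equivariant, this matrix commutes with $\overline{U}\otimes\overline{U}\otimes U$ for every unitary $U\in M_{3}(\mathbb{C})$; this can be checked straight from the formula above, using that the flip commutes with $A\otimes A$ and that $B_{9}$ commutes with $\overline{U}\otimes U$. Hence it preserves the isotypic decomposition of $(\mathbb{C}^{3})^{\otimes 3}$ under this $U(3)$-representation, which splits into a $15$-dimensional irreducible, a $6$-dimensional irreducible, and two copies of a $3$-dimensional irreducible. On the $15$-dimensional piece $F$ acts as $+1$ while $G$ and $H$ (each of rank three, with range contained in the $3$-dimensional isotypic block) vanish, so $C_{\Phi[\alpha,\beta]}$ acts there as $\alpha+1$; on the $6$-dimensional piece $F$ acts as $-1$ and $G,H$ again vanish, giving $\alpha-1$; and on the remaining $6$-dimensional subspace $F+G+\beta H$ reduces to $\mathds{1}_{3}$ tensored with a single $2\times 2$ matrix. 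Evaluating that block on explicit vectors spanning the two copies of the $3$-dimensional irreducible — where $F$ is $\mathrm{diag}(1,-1)$ and $G$, $\beta H$ are rank-one with traces $3$ and $3\beta$ — one finds its characteristic polynomial is $\lambda^{2}-(3+3\beta)\lambda+(7\beta-2)$, with roots $\tfrac{(3+3\beta)\pm\sqrt{9\beta^{2}-10\beta+17}}{2}$. Therefore $C_{\Phi[\alpha,\beta]}\geq 0$ if and only if $\alpha\geq 1$ and $\alpha\geq\tfrac{-(3+3\beta)+\sqrt{9\beta^{2}-10\beta+17}}{2}$; since these two bounds coincide at $\beta=-1/5$, with the square-root bound larger for $\beta\leq -1/5$ and smaller for $\beta\geq -1/5$, this is exactly conditions (\ref{1-condition}) and (\ref{2-condition}). (Alternatively, one may bypass the representation theory and just expand the characteristic polynomial of the $27\times 27$ matrix directly, as was done for $\Phi_{\alpha,\beta,n}(e_{11})$ in the proof of Proposition~\ref{Positivity}.)

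For the equivalence with complete copositivity, note that $\Phi[\alpha,\beta]\circ T_{3}$ has Choi matrix $\sum_{i,j}e_{ij}\otimes\Phi[\alpha,\beta](e_{ji})$, which is the partial transpose of $C_{\Phi[\alpha,\beta]}$ over the first tensor leg. That partial transpose turns $F$ into a copy of $B_{9}$ on the legs $1,2$ and turns $G$ into the flip of the legs $1,3$, while $\alpha\,\mathds{1}_{27}$ and $\beta H$ are unchanged; conjugating by the (unitary) flip of the two output legs — under which $B_{9}$ on legs $2,3$ is invariant — carries this back to $C_{\Phi[\alpha,\beta]}$. Thus $C_{\Phi[\alpha,\beta]\circ T_{3}}$ is unitarily equivalent to $C_{\Phi[\alpha,\beta]}$, so complete copositivity is equivalent to complete positivity and is governed by the same conditions.

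I expect the main obstacle to lie in the middle paragraph: correctly identifying the invariant subspaces of the $27\times 27$ Choi matrix and, above all, pinning down the $2\times 2$ block — in particular verifying that the way the two copies of the $3$-dimensional irreducible couple through $G$ and $H$ produces precisely the discriminant $9\beta^{2}-10\beta+17$ (equivalently the determinant $7\beta-2$) — and then carefully comparing the two resulting lower bounds on $\alpha$ on either side of $\beta=-1/5$ to arrive at the stated piecewise description.
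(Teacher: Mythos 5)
Your proposal is correct and reaches the stated conditions, but it executes the decisive computation by a genuinely different (and more verifiable) route than the paper. Both arguments reduce the claim to $C_{\Phi[\alpha,\beta]}\geq 0$ via Theorem \ref{CPresult}; the paper then simply asserts the characteristic polynomial of the $27\times 27$ Choi matrix ``by a calculation'', whereas you decompose $C_{\Phi[\alpha,\beta]}=\alpha\mathds{1}_{27}+F+G+\beta H$ and block-diagonalise it over the isotypic components of $U\mapsto\overline{U}\otimes\overline{U}\otimes U$. Your $2\times 2$ block checks out: in the (non-orthogonal) basis $v_b=\tfrac{1}{\sqrt3}\sum_k e_k\otimes e_b\otimes e_k$, $w_b=\tfrac{1}{\sqrt3}\sum_k e_b\otimes e_k\otimes e_k$ one finds $F+G+\beta H$ acts as $\left(\begin{smallmatrix}3&2\\ 1+\beta&3\beta\end{smallmatrix}\right)$, whose characteristic polynomial is indeed $\lambda^{2}-(3+3\beta)\lambda+(7\beta-2)$ with discriminant $9\beta^{2}-10\beta+17$. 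It is worth noting that your multiplicities ($\alpha+1$ with multiplicity $15$, $\alpha-1$ with multiplicity $6$, and the two quadratic roots with multiplicity $3$ each) disagree with the polynomial printed in the paper (which has multiplicities $12$ and $3$ plus a sextuple eigenvalue $\alpha$), and yours are the correct ones: at $\alpha=\beta=0$ one computes $\mathrm{Tr}\big((F+G)^{2}\big)=27+27+2\cdot 3=60$, matching your $15+6+3\cdot 13=60$ but not the paper's $3+12+0+39=54$, and likewise $\det(F+G)=-8\neq 0$. The discrepancy is immaterial for the proposition, since in either version the binding constraints are exactly $\alpha\geq 1$ and $\alpha\geq\tfrac{-(3+3\beta)+\sqrt{9\beta^{2}-10\beta+17}}{2}$, and your comparison of these two bounds across $\beta=-1/5$ yields precisely conditions (\ref{1-condition}) and (\ref{2-condition}). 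Your copositivity argument (partial transpose on the input leg, then conjugation by the swap of the two output legs) is an equivalent and somewhat more transparent form of the paper's identity $(i_{3}\otimes T_{3}\otimes T_{3})C_{\Phi[\alpha,\beta]}=C_{\Phi[\alpha,\beta]\circ T_{3}}$. What your route buys is a hand-checkable spectral analysis that generalises to $\Phi_{\alpha,\beta,n}$ for all $n$; what the paper's buys is brevity at the cost of an unverifiable (and here slightly erroneous) brute-force step.
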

\begin{proof}
To prove the above statement, we work on the eigenvalues of the Choi matrix $C_{\Phi[\alpha, \beta]}$ of ${\Phi[\alpha, \beta]}$. By a calculation, the characteristic polynomial of $C_{\Phi[\alpha, \beta]}$ is given as: $(\lambda-(-1+\alpha))^{3}(\lambda-(1+\alpha))^{12}(\lambda-\alpha)^{6}\bigg[\lambda-\bigg(\alpha+\frac{(3+3\beta) \pm \sqrt{9\beta^{2}-10\beta+17}}{2}\bigg)\bigg]^{3}.$ \\
\vspace{0.5mm}
To prove the necessary part, assume that $\Phi[\alpha, \beta]$ is completely positive. Hence, all the eigenvalues of the Choi matrix $C_{\Phi[\alpha, \beta]}$ are positive. That is,
\begin{enumerate}
\item $\alpha \geq 1,$
\item	$\alpha \geq \frac{-(3+3\beta)+\sqrt{9\beta^{2}-10\beta+17}}{2}.$
\end{enumerate}
It is evident that if $\beta \leq -0.2,$ then $\alpha \geq \frac{-(3+3\beta)+\sqrt{9\beta^{2}-10\beta+17}}{2} \geq 1$	
Hence, it is enough to take, 
\begin{enumerate}
\item $\alpha \in \bigg[\frac{-(3+3\beta)+\sqrt{9\beta^{2}-10\beta+17}}{2} , \infty \bigg)$ when $\beta \leq -0.2$.
\item $\alpha \in \bigg[\frac{-(3+3\beta)+\sqrt{9\beta^{2}-10\beta+17}}{2}, \infty\bigg) \cap [1, \infty)$ when $\beta \geq -0.2$.
\end{enumerate}
For the sufficient part, let $\Phi[\alpha, \beta]$ satisfy conditions (1) and (2) of above Proposition. From the above reasoning, it can be easily seen that the Choi matrix of $\Phi[\alpha, \beta]$ is positive. Hence, $\Phi[\alpha, \beta]$ is completely positive. \\
It is easy to see that, $$(i_{3}\otimes T_{3} \otimes T_{3})C_{\Phi[\alpha, \beta]} =C_{\Phi[\alpha, \beta] \circ T_{3}}.$$

Therefore, $\Phi[\alpha, \beta]$ is completely positive if and only if it is completely copositive if and only if $\Phi[\alpha, \beta]$ satisfy conditions (\ref{1-condition}) and (\ref{2-condition}) of the statement.\\
\end{proof} 
\begin{note}
	In Proposition \ref{CP}, we see that $\Phi[\alpha, \beta]$(In fact, $\Phi_{\alpha, \beta,n}$) is completely positive if and only if it is completely copositive. This is not the case for any linear map satisfying equivariance property. For example, consider the transpose map on finite-dimensional matrix algebras. Clearly, it satisfies equivariance property (Take $V=\overline{U}$). It is completely copositive and not completely positive. \\
	Hence, this property is not necessarily true for any map satisfying equivariance property. 
\end{note}
\begin{rem}
Graphically, the portion where $\Phi[\alpha, \beta]$ is positive and completely positive can be represented as follows. Dotted region in red colour describes the values of parameters $\alpha$ and $\beta$ for which the family of maps $\Phi[\alpha, \beta]$ is completely positive. 
\end{rem}
\newpage
 \pgfplotsset{compat=1.10}
 \usepgfplotslibrary{fillbetween}
 \usetikzlibrary{patterns}
 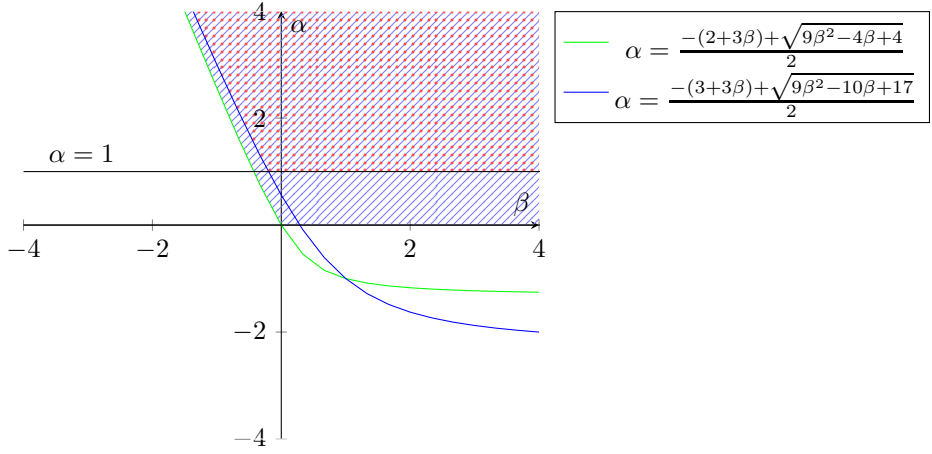
\begin{figure}[h]
 	\centering
 	\begin{tikzpicture}
 	\begin{axis}[
 	xmin=-4, xmax=4,
 	ymin=-4, ymax=4,
 	axis lines=middle,
 	xlabel = $\beta$,
 	ylabel = {$\alpha$},
 	legend pos= outer north east
 	]  
\addplot[name path=F,green,domain={-4:4}] {(-2-3*x+(9*x^2 - 4*x + 4)^0.5)*0.5} node[pos=4, above]{$f$};
\addlegendentry{$\alpha=\frac{-(2+3\beta)+\sqrt{9\beta^{2}-4\beta+4}}{2}$}
\addplot[name path=G,blue,domain={-4:4}] {(-3-3*x+(9*x^2 - 10*x + 17)^0.5)*0.5} node[pos=4, above]{$g$};
\addlegendentry{$\alpha=\frac{-(3+3\beta)+\sqrt{9\beta^{2}-10\beta+17}}{2}$}
\addplot[name path=I,white,domain={-4:5}] {4}node[pos=.1, below]{$$};
\addplot[name path=H,white,domain={-4:5}] {-4}node[pos=.1, below]{$h$};
\addplot[name path=L,black,domain={-4:5}] {0}node[pos=.1, below]{$$};
\addplot[name path=J,black,domain={-4:5}] {1}node[pos=.1, above]{$\alpha=1$};
\addplot[pattern=north east lines, pattern color=blue!50]fill between[of=F and I, soft clip={domain=-1.5:0}];
\addplot[pattern=north east lines, pattern color=blue!50]fill between[of=I and K, soft clip={domain=0:4}];
\addplot[pattern= dots, pattern color=red!70]fill between[of=G and I, soft clip={domain=-2:-0.2}];
\addplot[pattern= dots, pattern color=red!70]fill between[of=J and I, soft clip={domain=0:4}];
\addplot[pattern=dots, pattern color=red!70]fill between[of=I and J, soft clip={domain=-0.2:0}];
\end{axis}
\end{tikzpicture}
\caption{Region of Completely Positivity}
\end{figure}

Propositions \ref{Positivity} and \ref{CP} give two different sets of values of parameters for which the family of maps  $\Phi[\alpha, \beta]$ is positive and completely positive. The next theorem gives the values of parameters $\alpha$ and $\beta$ for which the map $\Phi[\alpha, \beta]$ is positive and not completely positive. 

\begin{theorem}\label{P and not CP}
The linear map $\Phi[\alpha, \beta]$ is positive and not completely positive if and only if 

\begin{enumerate}
\item $\alpha \in \bigg[\frac{-(2+3\beta)+\sqrt{9\beta^{2}-4\beta+4}}{2},\frac{-(3+3\beta)+\sqrt{9\beta^{2}-10\beta+17}}{2}\bigg)$ if $\beta \leq -0.2 $.
\item $\alpha \in \bigg[\frac{-(2+3\beta)+\sqrt{9\beta^{2}-4\beta+4}}{2},1\bigg)$ if $-0.2 \leq \beta \leq 0$.\\
\item $\alpha \in [{0,1})$ if $\beta \geq 0$.
\end{enumerate}
\end{theorem}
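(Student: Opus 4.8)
The plan is to obtain this theorem directly from Propositions~\ref{Positivity} and~\ref{CP}. Since complete positivity implies $1$-positivity, i.e.\ positivity, the set of parameters $(\alpha,\beta)$ for which $\Phi[\alpha,\beta]$ is completely positive is contained in the set for which it is positive. Hence the region where $\Phi[\alpha,\beta]$ is positive but \emph{not} completely positive is exactly the set difference of the positivity region described in Proposition~\ref{Positivity} (specialised to $n=3$) and the complete-positivity region described in Proposition~\ref{CP}. So the whole proof reduces to writing this set difference down explicitly, splitting along the same thresholds $\beta=0$ and $\beta=-0.2$ that already appear in those two propositions.

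Concretely, I would introduce the two boundary curves $f(\beta)=\tfrac{-(2+3\beta)+\sqrt{9\beta^{2}-4\beta+4}}{2}$ (the lower edge of the positivity region for $\beta\le 0$) and $g(\beta)=\tfrac{-(3+3\beta)+\sqrt{9\beta^{2}-10\beta+17}}{2}$ (the lower edge of the complete-positivity region arising from the relevant eigenvalue of the Choi matrix), and then argue case by case. For $\beta\le -0.2$, positivity reads $\alpha\ge f(\beta)$ and complete positivity reads $\alpha\ge g(\beta)$, so the difference is $\alpha\in[f(\beta),g(\beta))$. For $-0.2\le\beta\le 0$, positivity still reads $\alpha\ge f(\beta)$ while complete positivity reads $\alpha\ge\max\{1,g(\beta)\}$; once one checks $g(\beta)\le 1$ on this interval, the complete-positivity constraint becomes simply $\alpha\ge 1$, so the difference is $\alpha\in[f(\beta),1)$. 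For $\beta\ge 0$, positivity reads $\alpha\ge 0$ and, again using $g(\beta)\le 1$, complete positivity reads $\alpha\ge 1$, so the difference is $\alpha\in[0,1)$. Checking the three pieces at $\beta=-0.2$ (where $g(-0.2)=1$, so the two descriptions of the upper endpoint agree) and at $\beta=0$ (where $f(0)=0$) confirms that they glue together into the three-part statement.

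The routine but slightly delicate part — and the main obstacle — is verifying the comparisons between the threshold functions that legitimise the case analysis: namely $f(\beta)\le g(\beta)$ for $\beta\le -0.2$, $g(\beta)\le 1$ for $\beta\ge -0.2$ with equality exactly at $\beta=-0.2$, and $f(\beta)\le 1$ for $-0.2\le\beta\le 0$. Each of these is an inequality between expressions of the form $a+\sqrt{P(\beta)}$ with $P$ a quadratic in $\beta$, so I would handle it by isolating a single radical, checking that both sides have the sign needed for squaring to be an equivalence on the range in question, and reducing to a polynomial inequality in $\beta$ that can be verified directly. It is also worth recording that the radicands $9\beta^{2}-4\beta+4$ and $9\beta^{2}-10\beta+17$ have negative discriminants, hence are positive for every real $\beta$; this guarantees that the thresholds are well defined and that the characteristic-polynomial factorisations quoted from the proofs of Propositions~\ref{Positivity} and~\ref{CP} remain valid over the whole parameter range considered here.
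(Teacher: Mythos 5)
Your proposal is correct and follows exactly the route the paper takes: the paper's proof of this theorem is literally "It is clear from Propositions \ref{Positivity} and \ref{CP}," i.e.\ the set difference of the two regions, which is what you compute. Your additional verifications (that $g(-0.2)=1$, $f(0)=0$, $f\le g$ for $\beta\le-0.2$, and $g\le 1$ for $\beta\ge-0.2$) merely make explicit the case-gluing that the paper leaves to the reader.
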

\begin{proof}	
It is clear from Propositions \ref{Positivity} and \ref{CP}.
\end{proof}
\begin{rem}
Figure (\ref{graph3}) represents the portion where the map $\Phi[\alpha, \beta]$ is positive, completely positive and not completely positive. Starred region in green colour gives the values of parameters $\alpha$ and $\beta$ for which the map $\Phi[\alpha, \beta]$ is positive and not completely positive.
\end{rem} 	
\pgfplotsset{compat=1.10}
\usepgfplotslibrary{fillbetween}
\usetikzlibrary{patterns}
\begin{figure}[h]
\centering
\begin{tikzpicture}
\begin{axis}[
xmin=-4, xmax=4,
ymin=-4, ymax=4,
axis lines=middle,
xlabel = $\beta$,
ylabel = {$\alpha$},
legend pos= outer north east
] 
\addplot[name path=F,green,domain={-4:4}] {(-2-3*x+(9*x^2 - 4*x + 4)^0.5)*0.5} node[pos=4, above]{$f$};
\addlegendentry{$\alpha=\frac{-(2+3\beta)+\sqrt{9\beta^{2}-4\beta+4}}{2}$} 
\addplot[name path=G,blue,domain={-4:4}] {(-3-3*x+(9*x^2 - 10*x + 17)^0.5)*0.5} node[pos=4, above]{$g$};
\addlegendentry{$\alpha=\frac{-(3+3\beta)+\sqrt{9\beta^{2}-10\beta+17}}{2}$}	
\addplot[name path=I,white,domain={-4:5}] {4}node[pos=.1, below]{$$};
\addplot[name path=H,white,domain={-4:5}] {-4}node[pos=.1, below]{$h$};
\addplot[name path=L,black,domain={-4:5}] {0}node[pos=.1, below]{$$};
\addplot[name path=J,black,domain={-4:5}] {1}node[pos=.1, above]{$\alpha=1$};
\addplot[pattern=north east lines, pattern color=blue!50]fill between[of=F and I, soft clip={domain=-1.5:0}];
\addplot[pattern=north east lines, pattern color=blue!50]fill between[of=I and K, soft clip={domain=0:4}];
\addplot[pattern= dots, pattern color=red!70]fill between[of=G and I, soft clip={domain=-2:-0.2}];
\addplot[pattern= dots, pattern color=red!70]fill between[of=J and I, soft clip={domain=0:4}];
\addplot[pattern= dots, pattern color=red!70]fill between[of=I and J, soft clip={domain=-0.2:0}];
\addplot[pattern= sixpointed stars, pattern color=green!70]fill between[of=F and G, soft clip={domain=-4:-0.2 }];
\addplot[pattern= sixpointed stars, pattern color=green!70]fill between[of=F and J, soft clip={domain=-0.2:0}];
\addplot[pattern= sixpointed stars, pattern color=green!70]fill between[of=L and J, soft clip={domain=0:4}];
\end{axis}
\end{tikzpicture}
\caption{Region of Positivity and not Completely positivity}
\label{graph3}
\end{figure}
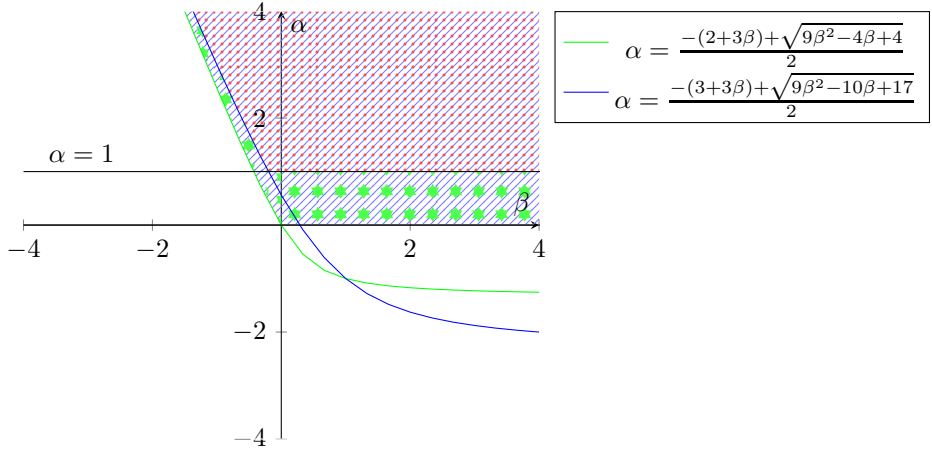	
\subsection{$2$-positivity}
\normalfont{Choi \cite{choi1072positive} gave the first example of a linear map defined on $M_{3}(\mathbb{C})$, which is $2$-positive and not completely positive. We show that there exist values of parameters $\alpha$ and $\beta$ for which the family of maps $\Phi[\alpha, \beta]$ is $2$-positive and not completely positive.}
 \begin{theorem}\label{2-positivity}
 The linear map $\Phi[\alpha, \beta]$ is $2$-positive if and only if $\beta \in \mathbb{R}$ and $\alpha \in [1,\infty)\cap[-S_{\beta},\infty),$ where $S_{\beta}$ is the smallest root of the polynomial \\ $\lambda^{3}+(-2-3\beta)\lambda^{2}+(-2+4\beta)\lambda+2\beta.$
 \end{theorem}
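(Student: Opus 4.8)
The plan is to deduce the claim from Proposition \ref{k-positivity} applied with $k=2$. First I would note that $\Phi[\alpha,\beta]$ is self-adjoint, i.e.\ $\Phi[\alpha,\beta](A^{*})=\Phi[\alpha,\beta](A)^{*}$, since transposition, entrywise conjugation and the trace are compatible with the $*$-operation and $\alpha,\beta$ are real; and that $\Phi[\alpha,\beta]$ has the equivariance property with $V=\overline{U}\otimes U$, exactly as in the proof of Proposition \ref{Positivity}. Since $\min\{3,9\}=3\geq 2$, Proposition \ref{k-positivity} then reduces the statement to: $\Phi[\alpha,\beta]$ is $2$-positive if and only if the $18\times 18$ block matrix $M:=[\Phi[\alpha,\beta](e_{ij})]_{i,j=1}^{2}$ (with the outer, size-$2$ index carrying the block structure and the $e_{ij}$ the matrix units of $M_{3}(\mathbb C)$) is positive semi-definite.

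Next I would write out the four relevant images from the definition, using $e_{ij}^{t}=e_{ji}$ and $\mathrm{Tr}(e_{ij})=\delta_{ij}$: for $a\in\{1,2\}$, $\Phi[\alpha,\beta](e_{aa})=e_{aa}\otimes\mathds{1}_{3}+\mathds{1}_{3}\otimes e_{aa}+\alpha\mathds{1}_{9}+\beta B_{9}$, whereas $\Phi[\alpha,\beta](e_{12})=e_{21}\otimes\mathds{1}_{3}+\mathds{1}_{3}\otimes e_{12}$ and $\Phi[\alpha,\beta](e_{21})=e_{12}\otimes\mathds{1}_{3}+\mathds{1}_{3}\otimes e_{21}$. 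Assembling $M=\sum_{a,b=1}^{2}e_{ab}\otimes\Phi[\alpha,\beta](e_{ab})$ and separating the scalar part coming from $\alpha$ gives
\begin{align*}
M&=\alpha\,\mathds{1}_{18}+M_{0}(\beta),\\
M_{0}(\beta)&=\sum_{a,b=1}^{2}e_{ab}\otimes e_{ba}\otimes\mathds{1}_{3}+\sum_{a,b=1}^{2}e_{ab}\otimes\mathds{1}_{3}\otimes e_{ab}+\beta\,\mathds{1}_{2}\otimes B_{9},
\end{align*}
a matrix depending on $\beta$ only. Hence $M\geq 0$ if and only if $\alpha\geq-\lambda_{\min}(M_{0}(\beta))$, and the whole problem becomes the computation of the spectrum of the fixed matrix $M_{0}(\beta)$.

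To get that spectrum I would decompose $\mathbb C^{2}\otimes\mathbb C^{3}\otimes\mathbb C^{3}$ into mutually orthogonal $M_{0}(\beta)$-invariant subspaces, exploiting the residual symmetries: the partial flip interchanging the first leg with the $\{1,2\}$-block of the second leg (which splits the first two summands of $M_{0}$ into symmetric and antisymmetric parts), together with the unitary equivariance still available on the third leg. I expect $M_{0}(\beta)$ to break into a family of one- and two-dimensional blocks with constant eigenvalues whose minimum is $-1$ (this produces the constraint $\alpha\geq 1$), and one genuinely $\beta$-dependent $3\times 3$ block whose characteristic polynomial is exactly $\lambda^{3}+(-2-3\beta)\lambda^{2}+(-2+4\beta)\lambda+2\beta$. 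Then $\lambda_{\min}(M_{0}(\beta))=\min\{-1,S_{\beta}\}$ with $S_{\beta}$ the smallest (real) root of that cubic, and $M\geq 0$ if and only if $\alpha\geq 1$ and $\alpha\geq-S_{\beta}$, i.e.\ $\alpha\in[1,\infty)\cap[-S_{\beta},\infty)$, as claimed.

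The main obstacle is that last step: choosing a basis in which the $18$-dimensional space splits cleanly, verifying that the $\beta$-dependent part collapses to a single $3\times 3$ block with the stated characteristic polynomial, and checking that no other block contributes an eigenvalue below $\min\{-1,S_{\beta}\}$. A convenient way to organize the bookkeeping is to first diagonalize the $\beta$-independent operator $\sum e_{ab}\otimes e_{ba}\otimes\mathds{1}_{3}+\sum e_{ab}\otimes\mathds{1}_{3}\otimes e_{ab}$ and then treat $\beta\,\mathds{1}_{2}\otimes B_{9}$ as a perturbation acting within each of its eigenspaces, where it has block rank one. As sanity checks I would test the cubic against the specialization $\beta=0$ and against the boundary curve one would plot for the $2$-positive region.
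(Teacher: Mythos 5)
Your proposal is correct and follows essentially the same route as the paper: Proposition \ref{k-positivity} (with the self-adjointness and equivariance you check) reduces $2$-positivity to positive semi-definiteness of the $18\times 18$ matrix $[\Phi[\alpha,\beta](e_{ij})]_{i,j=1}^{2}$, and the paper settles this by exactly the spectral computation you set up, recording the characteristic polynomial $(\lambda-(2+\alpha))(\lambda-(-1+\alpha))(\lambda-(1+\alpha))^{7}(\lambda-\alpha)^{3}\big((\lambda-\alpha)^{3}+(-2-3\beta)(\lambda-\alpha)^{2}+(-2+4\beta)(\lambda-\alpha)+2\beta\big)^{2}$, which confirms your prediction that the constant eigenvalues of $M_{0}(\beta)$ are $\{2,-1,1,0\}$ with minimum $-1$ and that the $\beta$-dependence sits in a $3\times 3$ block with the stated cubic (the only correction being that this block occurs with multiplicity two, not one). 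The paper leaves that eigenvalue computation as ``by a calculation,'' just as you flag it as the remaining obstacle, so the two arguments are at the same level of detail.
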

 
 \begin{proof} 
Let $\Phi[\alpha, \beta]$ be $2$-positive. By the definition of $2$-positive linear maps, $(i_{2}\otimes \Phi[\alpha, \beta])$ is positive. 
Hence, $(i_{2}\ \otimes\Phi[\alpha, \beta])[(e_{ij})]_{i,j=1}^{2} \geq 0$ where $(e_{ij})$ are matrix units in $M_{3}(\mathbb C).$ \\
By a calculation, the characteristic polynomial of  $[\Phi[\alpha, \beta](e_{ij})]_{i,j=1}^{2}$ is \\ $(\lambda -(2+\alpha))(\lambda- (-1+\alpha))(\lambda-(1+\alpha))^{7} (\lambda-\alpha)^{3}((\lambda-\alpha)^{3}+(-2-3\beta)(\lambda-\alpha)^{2}+(-2+4\beta)(\lambda-\alpha)+2\beta)^{2}$. \\ Therefore, we have the required bound on the values of parameters $\alpha$ and $\beta$.\\
To prove the sufficient part, we use Proposition \ref{k-positivity}. It is clear from the proof of Proposition \ref{Positivity} that $\Phi[\alpha, \beta]$ has equivariance property. Given the conditions on the values of parameters $\alpha$ and $\beta,$ it follows that the matrix $[\Phi[\alpha, \beta](e_{ij})]_{i,j=1}^{2}$ is positive.
Hence, the proof is completed.
\end{proof}
\begin{theorem}\label{2-positive}
The linear map $\Phi[\alpha, \beta]$ is $2$-positive and not completely positive if and only if $\beta \leq -0.2$ and 
\begin{equation*}
\begin{split}
\alpha \in \bigg[\frac{-(2+3\beta)+\sqrt{9\beta^{2}-4\beta+4}}{2},\frac{-(3+3\beta)+\sqrt{9\beta^{2}-10\beta+17}}{2}\bigg) \\ 
\cap [1,\infty) \cap  [-S_{\beta},\infty),
\end{split}
\end{equation*}
where $S_{\beta}$ is the smallest root of the polynomial $\lambda^{3}+(-2-3\beta)\lambda^{2}+(-2+4\beta)\lambda+2\beta.$
\begin{proof}
It is clear from Theorems \ref{P and not CP} and \ref{2-positivity} that the conditions for $\alpha$ and $\beta$ are sufficient for $\Phi[\alpha, \beta]$ to be $2$-positive and not completely positive. We need to consider those values of parameters $\alpha$ and $\beta$ for which the map $\Phi[\alpha, \beta]$ is $2$-positive and not completely positive. If we consider  $\beta > -0.2,$ then by Theorem \ref{P and not CP}, $\alpha < 1$  which destroys the $2$-positivity of linear map $\Phi[\alpha, \beta]$. \\
Hence, $\beta \leq -0.2$ and $\alpha \in \bigg[\frac{-(2+3\beta)+\sqrt{9\beta^{2}-4\beta+4}}{2},\frac{-(3+3\beta)+\sqrt{9\beta^{2}-10\beta+17}}{2}\bigg)$.\\ Therefore, we conclude that \\ $\alpha \in \bigg[\frac{-(2+3\beta)+\sqrt{9\beta^{2}-4\beta+4}}{2},\frac{-(3+3\beta)+\sqrt{9\beta^{2}-10\beta+17}}{2}\bigg) 
\cap [1,\infty) \cap [-S_{\beta},\infty) \hspace{1mm}\text{if} \hspace{1mm} \\ \beta 
\leq -0.2.$
\end{proof}
\end{theorem}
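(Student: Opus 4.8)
The plan is to deduce the statement purely by combining two characterizations already obtained: Theorem~\ref{2-positivity}, which describes exactly when $\Phi[\alpha,\beta]$ is $2$-positive, and Theorem~\ref{P and not CP} (equivalently Propositions~\ref{Positivity} and \ref{CP}), which describes exactly when it is positive but not completely positive. Since $2$-positivity implies $1$-positivity, the region ``$2$-positive and not completely positive'' equals the intersection of the $2$-positive region with the ``positive and not completely positive'' region, so the whole proof reduces to intersecting the interval conditions produced by those two theorems and simplifying.

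First I would record the two input regions. By Theorem~\ref{2-positivity}, $\Phi[\alpha,\beta]$ is $2$-positive iff $\alpha \in [1,\infty)\cap[-S_\beta,\infty)$ for arbitrary $\beta\in\mathbb{R}$, where $S_\beta$ is the smallest root of $\lambda^{3}+(-2-3\beta)\lambda^{2}+(-2+4\beta)\lambda+2\beta$; in particular every $2$-positive $\Phi[\alpha,\beta]$ has $\alpha\geq 1$. By Theorem~\ref{P and not CP}, the ``positive and not completely positive'' region splits into three pieces according to the position of $\beta$ relative to $-0.2$, and in each of the two pieces with $\beta\geq -0.2$ the admissible $\alpha$ satisfy $\alpha<1$ (namely $\alpha\in[L(\beta),1)$ with $L(\beta)=\frac{-(2+3\beta)+\sqrt{9\beta^{2}-4\beta+4}}{2}$, or $\alpha\in[0,1)$).

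Next I would carry out the intersection. Imposing $\alpha\geq 1$ from $2$-positivity immediately eliminates the two pieces with $\beta\geq -0.2$, since there $\alpha<1$; hence only the range $\beta\leq -0.2$ survives, where Theorem~\ref{P and not CP} gives $\alpha\in\big[L(\beta),\,M(\beta)\big)$ with $M(\beta)=\frac{-(3+3\beta)+\sqrt{9\beta^{2}-10\beta+17}}{2}$. Intersecting this with the $2$-positivity conditions $\alpha\geq 1$ and $\alpha\geq -S_\beta$ yields exactly the displayed set in the statement, and conversely any $(\alpha,\beta)$ in that set lies in both input regions and is therefore $2$-positive and not completely positive. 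This establishes the ``if and only if''.

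There is essentially no deep obstacle, since the characteristic-polynomial computations underlying Theorems~\ref{2-positivity} and \ref{P and not CP} have already been carried out; the one point needing care is the bookkeeping around the case split at $\beta=-0.2$, together with the key observation that $2$-positivity forces $\alpha\geq 1$, which is what rules out $\beta>-0.2$. It is also worth checking at a sample value (for instance $\beta=-1$, or the value corresponding to Choi's map) that the resulting intersection is nonempty, so that the theorem is not vacuous.
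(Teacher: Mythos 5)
Your proposal is correct and follows essentially the same route as the paper: intersect the $2$-positivity region of Theorem \ref{2-positivity} with the positive-but-not-completely-positive region of Theorem \ref{P and not CP}, and use the fact that $2$-positivity forces $\alpha \geq 1$ to rule out $\beta > -0.2$. No substantive difference from the paper's argument.
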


\begin{rem}
The conditions of being $(k-1)$-positive and not $k$-positive can easily be generalized for any natural numbers $n > 3$ and $2 < k \leq n.$ From Proposition \ref{k-positivity}, it boils down to calculate the eigenvalues of block matrix $[\Phi[\alpha, \beta](e_{ij})]_{i,j=1}^{k}$ for $k$-positivity. Here, $(e_{ij})$ are matrix units in $M_{n}(\mathbb{C})$. 
\end{rem}

\subsection{Decomposability}
\begin{defn}
A linear map is called decomposable if it can be written as a sum of a completely positive and a completely copositive map. Otherwise, it is called an indecomposable map.	
\end{defn} 
\normalfont{In this section, we investigate the decomposability of the family of maps $\Phi[\alpha, \beta]$. Here we give necessary conditions of decomposability.}
\begin{theorem}\label{decomposable new}
 The map $\Phi[\alpha, \beta]$ is decomposable if following conditions hold.
\begin{enumerate}
\item $\alpha \geq 0$ \hspace{1mm} if \hspace{1mm} $\beta \geq 0.$
\item  $\alpha \geq \frac{-(6+3\beta)+\sqrt{9\beta^{2}-28\beta+36}}{2}$ \hspace{1mm} if \hspace{1mm} $\beta \leq 0.$
\end{enumerate}	
\end{theorem}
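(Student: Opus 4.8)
The plan is to exhibit, for parameters $(\alpha,\beta)$ in the stated region, an explicit decomposition $\Phi[\alpha,\beta] = \Psi_{1} + \Psi_{2}$ with $\Psi_{1}$ completely positive and $\Psi_{2}$ completely copositive. The natural guess is to split the trace term between a ``$B_{n^{2}}$-type'' completely positive piece and an ``$\mathds{1}_{n^{2}}$-type'' piece, and to absorb the two non-trace summands $A^{t}\otimes\mathds{1}_{3}$ and $\mathds{1}_{3}\otimes A$ into the copositive and completely positive parts respectively. Indeed, $A\mapsto \mathds{1}_{3}\otimes A$ is completely positive, and $A\mapsto A^{t}\otimes\mathds{1}_{3} = (T_{3}(A))\otimes\mathds{1}_{3}$ is completely copositive since $A\mapsto A\otimes\mathds{1}_{3}$ is completely positive. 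Likewise $A\mapsto \mathrm{Tr}(A)\mathds{1}_{n^{2}}$ and $A\mapsto\mathrm{Tr}(A)B_{n^{2}}$ are both completely positive (the first is $\sum_{i}e_{ii}^{*}Ae_{ii}\otimes\mathds{1}$ up to rearrangement; the second has rank-one positive Choi matrix), while $A\mapsto\mathrm{Tr}(A)B_{n^{2}}$ composed with $T$ — or rather the Choi matrix of the trace/$B$ terms under the partial transposes appearing in Proposition \ref{CP} — is also completely copositive. So the real content is purely a bookkeeping problem: how much of the scalar $\alpha\mathds{1}_{n^{2}}+\beta B_{n^{2}}$ must be routed to each side.

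Concretely, I would write $\Phi[\alpha,\beta] = \Psi_{\mathrm{cp}} + \Psi_{\mathrm{ccp}}$ where
$$\Psi_{\mathrm{ccp}}(A) = A^{t}\otimes\mathds{1}_{3} + \mathrm{Tr}(A)\bigl(a\,\mathds{1}_{9} + b\,B_{9}\bigr), \qquad \Psi_{\mathrm{cp}}(A) = \mathds{1}_{3}\otimes A + \mathrm{Tr}(A)\bigl((\alpha-a)\mathds{1}_{9} + (\beta-b)B_{9}\bigr),$$
and then choose the free constants $a,b$ so that both Choi matrices are positive semidefinite. For $\Psi_{\mathrm{cp}}$ this means the map $A\mapsto \mathds{1}_{3}\otimes A + \mathrm{Tr}(A)(\alpha')\mathds{1}_{9}+\mathrm{Tr}(A)(\beta')B_{9}$ with $\alpha'=\alpha-a$, $\beta'=\beta-b$ is completely positive; by Proposition \ref{CP} (or a direct Choi-matrix eigenvalue computation in the same spirit as its proof) this reduces to explicit inequalities on $\alpha',\beta'$. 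For $\Psi_{\mathrm{ccp}}$ one applies $i_{3}\otimes T_{3}\otimes T_{3}$ to its Choi matrix, as in the proof of Proposition \ref{CP}, reducing complete copositivity to another set of scalar inequalities. Optimizing the choice of $(a,b)$ to make the union of admissible $(\alpha,\beta)$ as large as possible is what produces the curve $\alpha=\frac{-(6+3\beta)+\sqrt{9\beta^{2}-28\beta+36}}{2}$: I expect this discriminant to emerge as the boundary case where the worst eigenvalue of one of the two Choi matrices vanishes after the constants $a,b$ have been tuned to their optimal values.

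The step I expect to be the main obstacle is the optimization over $(a,b)$: one must show not merely that \emph{some} split works on the claimed region, but that the claimed region is exactly the projection onto $(\alpha,\beta)$-space of the set of $(\alpha,\beta,a,b)$ for which both Choi matrices are positive. This requires identifying, for each $(\alpha,\beta)$, the best allocation and checking that the resulting constraint is the stated one — in particular tracking which eigenvalue of which piece is binding. A cleaner route, which I would try first, is to avoid the two-parameter optimization entirely by guessing the optimal split directly (e.g. $b = \beta/2$ or $b$ chosen so that the $B_{9}$-coefficients are balanced against the $\mathds{1}_{3}\otimes A$ and $A^{t}\otimes\mathds{1}_{3}$ contributions symmetrically), substituting, and verifying the two eigenvalue conditions collapse to conditions (1) and (2) of the statement. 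Since the theorem only claims these conditions are \emph{sufficient} for decomposability (not necessary), it suffices to produce one working family of decompositions; the honest optimization is only needed if one also wanted necessity.
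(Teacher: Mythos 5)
Your proposal takes essentially the same route as the paper: the paper writes $\Phi[\alpha,\beta]=\Phi_{1}+\Phi_{2}$ with $\Phi_{1}(A)=A^{t}\otimes\mathds{1}_{3}+\frac{\mathrm{Tr}(A)}{2}(\alpha\mathds{1}_{9}+\beta B_{9})$ completely copositive and $\Phi_{2}(A)=\mathds{1}_{3}\otimes A+\frac{\mathrm{Tr}(A)}{2}(\alpha\mathds{1}_{9}+\beta B_{9})$ completely positive, i.e.\ exactly your ansatz with the even split $a=\alpha/2$, $b=\beta/2$, and then reads conditions (1)--(2) off the eigenvalues of $C_{\Phi_{2}}$. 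Since only sufficiency is claimed, no optimization over $(a,b)$ is required, just as you anticipated in your ``cleaner route''.
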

\begin{proof}
Define $\Phi_{1}[\alpha,\beta]: M_{3}(\mathbb C) \rightarrow M_{3}(\mathbb C) \otimes M_{3}(\mathbb C)$ \hspace{0.5mm} given by \\ $$A \mapsto A^{t}\otimes \mathds{1}_{3}+\frac{\mathrm{Tr}(A)}{2}(\alpha \mathds{1}_{9}+\beta B_{9}).$$

And \\

$\Phi_{2}[\alpha,\beta]: M_{3}(\mathbb C) \rightarrow M_{3}(\mathbb C) \otimes M_{3}(\mathbb C)$ \hspace{0.5mm} given by \\ $$A \mapsto \mathds{1}_{3} \otimes A+\frac{\mathrm{Tr}(A)}{2} (\alpha \mathds{1}_{9}+\beta B_{9}).$$
By a calculation, $\Phi_{1}[\alpha,\beta]$ is completely copositive if and only if $\Phi_{2}[\alpha,\beta]$ is completely positive. The
characteristic polynomial of $C_{\Phi_{2}[\alpha,\beta]}$ is given by, $$\bigg(\lambda-\frac{\alpha}{2}\bigg)^{21}\bigg[\lambda-\bigg(\alpha+\bigg(\frac{6+3\beta \pm \sqrt{9\beta^{2}-28\beta+36}}{2}\bigg)\bigg)\bigg]^{3},$$
Therefore, $\Phi_{1}[\alpha,\beta]$ is completely copositive if and only if $\Phi_{2}[\alpha, \beta]$ is completely positive if and only if the following conditions hold.
\newpage
\begin{enumerate}
\item $\alpha \geq 0,$
\item $\alpha \geq \frac{-(6+3\beta)+\sqrt{9\beta^{2}-28\beta+36}}{2}.$
\end{enumerate}
Observe that $$\Phi[\alpha, \beta]= \Phi_{1}[\alpha,\beta]+\Phi_{2}[\alpha,\beta].$$
Hence, we conclude that $\Phi[\alpha, \beta]$ is decomposable if above two conditions hold.
\end{proof}
\pgfplotsset{compat=1.10}
\usepgfplotslibrary{fillbetween}
\usetikzlibrary{patterns}
\begin{figure}[h]
\centering
\begin{tikzpicture}
\begin{axis}
[
xmin=-4, xmax=4,
ymin=-4, ymax=4,
axis lines=middle,
xlabel = {$\beta$},
ylabel = {$\alpha$},
legend pos= outer north east]
\addplot[name path=F,green,domain={-4:4}] {(-2-3*x+(9*x^2 - 4*x + 4)^0.5)*0.5} node[pos=4, above]{$f$};
\addlegendentry{$\alpha=\frac{-(2+3\beta)+\sqrt{9\beta^{2}-4\beta+4}}{2}$}
\addplot[name path=Z,red,domain={-4:4}] {(-6-3*x+(9*x^2 - 28*x + 36)^0.5)*0.5} node[pos=4, above]{$x^2$};
\addlegendentry{$\alpha=\frac{-(6+3\beta)+\sqrt{9\beta^{2}-28\beta+36}}{2}$}
\addplot[name path=G,blue,domain={-4:4}]{(-3-3*x+(9*x^2 - 10*x + 17)^0.5)*0.5} node[pos=4, above]{x};
\addlegendentry{$\alpha=\frac{-(3+3\beta)+\sqrt{9\beta^{2}-10\beta+17}}{2}$}
\addplot[name path=L,black,domain={-4:5}] {0}node[pos=.1, below]{$$};
\addplot[name path=J,black,domain={-4:5}] {1}node[pos=.1, above]{$\alpha=1$};
\addplot[pattern= sixpointed stars, pattern color=black!70]fill between[of=F and G, soft clip={domain=-4:0}];
\addplot[pattern= sixpointed stars, pattern color=black!70]fill between[of=L and J, soft clip={domain=0:4}];
\addplot[pattern= vertical lines, pattern color=red!70]fill between[of=L and I, soft clip={domain=0:4}];
\addplot[pattern=vertical lines, pattern color=red!70]fill between[of=Z and I, soft clip={domain=-1.44:0}];
\end{axis}
\end{tikzpicture}
\caption{Region of  decomposability}
\label{graph4}
\end{figure}
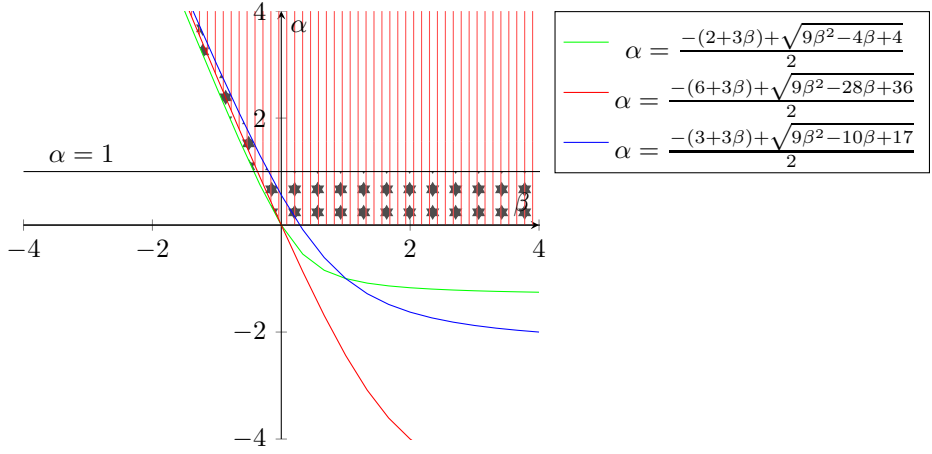
\normalfont{Figure(\ref{graph4}) gives values of parameters $\alpha$ and $\beta$ such that the map $\Phi[\alpha, \beta]$ is decomposable. The region of decomposability is shown by red vertical lines. It is evident from the figure(\ref{graph4}),
\[
 \begin{cases}
\alpha \geq \frac{-(3+3\beta)+\sqrt{9\beta^{2}-10\beta+17}}{2}  &\quad\text{if} \hspace{2mm} \beta \leq 0.\\
 \alpha \geq 1 &\quad\text{if} \hspace{2mm} \beta \geq 0. \\
 \end{cases}
\] 
For above values of parameters $\alpha$ and $\beta$, $\Phi[\alpha, \beta]$ is completely positive and hence decomposable. The region with vertical lines in red colour gives the values of those parameters.\\
The remaining portion is where $\Phi[\alpha, \beta]$ is decomposable and not completely positive, which is shown by black stars.}
\vspace{5mm}
\begin{corollary}\label{Yu}
The map $\Phi[\alpha, \beta]$ is decomposable and $2$-positive if 
\begin{equation*}
\begin{split}
\alpha \in \bigg[\frac{-(6+3\beta)+\sqrt{9\beta^{2}-28\beta+36}}{2} ,\frac{-(3+3\beta)+\sqrt{9\beta^{2}-10\beta+17}}{2}\bigg) \\
\cap [1,\infty) \cap [-S_{\beta},\infty)  \hspace{1.5mm}\text{and} \hspace{1.5mm} 	\beta \leq -0.2,\\
\end{split}
\end{equation*}
where $S_{\beta}$ is the smallest root of the polynomial $\lambda^{3}+(-2-3\beta)\lambda^{2}+(-2+4\beta)\lambda+2\beta$.
\end{corollary}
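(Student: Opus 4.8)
The plan is to obtain the statement as a direct combination of Theorem~\ref{decomposable new} and Theorem~\ref{2-positivity}, with no new computation beyond a short sanity check that the asserted region is non-empty. First I would observe that, since $\beta \le -0.2 < 0$, the second branch of Theorem~\ref{decomposable new} is in force: whenever $\alpha \ge \frac{-(6+3\beta)+\sqrt{9\beta^{2}-28\beta+36}}{2}$ the decomposition $\Phi[\alpha,\beta] = \Phi_{1}[\alpha,\beta] + \Phi_{2}[\alpha,\beta]$ exhibits $\Phi[\alpha,\beta]$ as a sum of a completely copositive map ($\Phi_{1}[\alpha,\beta]$) and a completely positive map ($\Phi_{2}[\alpha,\beta]$), hence $\Phi[\alpha,\beta]$ is decomposable. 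In particular this holds for every $\alpha$ in the interval of the Corollary, whose left endpoint is exactly that threshold.

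Next I would invoke Theorem~\ref{2-positivity}: $\Phi[\alpha,\beta]$ is $2$-positive precisely when $\alpha \in [1,\infty) \cap [-S_{\beta},\infty)$, where $S_{\beta}$ is the smallest root of $\lambda^{3}+(-2-3\beta)\lambda^{2}+(-2+4\beta)\lambda+2\beta$. Since the interval in the statement is by construction contained in $[1,\infty) \cap [-S_{\beta},\infty)$, any pair $(\alpha,\beta)$ satisfying the hypotheses is automatically $2$-positive. Combining the two observations yields that $\Phi[\alpha,\beta]$ is simultaneously decomposable and $2$-positive, which is the claim. The extra upper bound $\alpha < \frac{-(3+3\beta)+\sqrt{9\beta^{2}-10\beta+17}}{2}$ only shrinks the region — to precisely the part where, by Proposition~\ref{CP}, $\Phi[\alpha,\beta]$ additionally fails to be completely positive — so the implication remains valid a fortiori; indeed that sub-region is the one of genuine interest, since it exhibits decomposable, $2$-positive maps that are not completely positive.

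The only point deserving a line of care is non-vacuity, i.e.\ that for some $\beta \le -0.2$ the fourfold intersection of $\big[\frac{-(6+3\beta)+\sqrt{9\beta^{2}-28\beta+36}}{2},\frac{-(3+3\beta)+\sqrt{9\beta^{2}-10\beta+17}}{2}\big)$, $[1,\infty)$ and $[-S_{\beta},\infty)$ is non-empty. This amounts to comparing the four curves $\alpha=\frac{-(6+3\beta)+\sqrt{9\beta^{2}-28\beta+36}}{2}$, $\alpha=\frac{-(3+3\beta)+\sqrt{9\beta^{2}-10\beta+17}}{2}$, $\alpha=1$ and $\alpha=-S_{\beta}$ on $(-\infty,-0.2]$, which reduces to routine one-variable estimates (for instance one may check the inequalities explicitly at a convenient value such as $\beta=-1$ and then track the asymptotics as $\beta \to -\infty$). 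I expect no real obstacle here: all the substantive work has already been done in Theorems~\ref{decomposable new} and~\ref{2-positivity}, and the Corollary is a formal consequence plus this bookkeeping.

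\begin{proof}
It is clear from Theorems~\ref{decomposable new} and~\ref{2-positivity}. Indeed, for $\beta \le -0.2 \le 0$, Theorem~\ref{decomposable new} shows that $\Phi[\alpha,\beta]$ is decomposable whenever $\alpha \ge \frac{-(6+3\beta)+\sqrt{9\beta^{2}-28\beta+36}}{2}$, and Theorem~\ref{2-positivity} shows that $\Phi[\alpha,\beta]$ is $2$-positive if and only if $\alpha \in [1,\infty) \cap [-S_{\beta},\infty)$. Hence, if $\beta \le -0.2$ and
\[
\alpha \in \bigg[\frac{-(6+3\beta)+\sqrt{9\beta^{2}-28\beta+36}}{2},\frac{-(3+3\beta)+\sqrt{9\beta^{2}-10\beta+17}}{2}\bigg) \cap [1,\infty) \cap [-S_{\beta},\infty),
\]
then $\alpha$ lies in both regions, so $\Phi[\alpha,\beta]$ is decomposable and $2$-positive.
\end{proof}
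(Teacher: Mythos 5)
Your proposal is correct and takes essentially the same route as the paper, which likewise dispatches the corollary by intersecting the decomposability region of Theorem~\ref{decomposable new} with the $2$-positivity region (the paper cites Theorem~\ref{2-positive}, but the content used is exactly your Theorem~\ref{2-positivity} condition $\alpha\in[1,\infty)\cap[-S_{\beta},\infty)$). Your additional remarks on the role of the upper bound and on non-vacuity are sensible bookkeeping but not needed for the implication as stated.
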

\begin{proof}
It is clear from Theorems \ref{2-positive} and \ref{decomposable new}.
\end{proof}
\section{Conclusion}
\normalfont{In this paper, we have studied the properties of maps defined on finite-dimensional matrix algebras, which satisfy equivariance and showed that how equivariance can help to detect positivity and $k$-positivity of linear maps (Proposition \ref{k-positivity}). We have given a family of linear maps $\Phi_{\alpha, \beta,n}$ from $M_{n}(\mathbb{C})$ to $M_{n^{2}}(\mathbb{C})$ with two parameters $\alpha$ and $\beta$ which satisfy equivariance property and studied the properties of positivity, completely positivity, $2$-positivity and decomposability etc. Moreover, we determine the conditions of positivity of the family of linear maps $\Phi_{\alpha, \beta,n}$ for any natural number $n \geq 3$ and give the values of parameters $\alpha$ and $\beta$ for which the family of maps $\Phi_{\alpha, \beta,3}$ is $2$-positive and not completely positive. By using Proposition \ref{k-positivity}, the same calculation can be done for any natural number $n > 3.$ In this manner, we can obtain a class of linear maps which is $(k-1)$-positive and not $k$-positive for any natural numbers $n > 3$ and $2 < k \leq n.$ \\
It would be interesting to characterize the class of linear maps which satisfy equivariance property. By using Lemma \ref{mainlemma}, the characterization of linear maps with equivariance property can help to give a characterization of positive linear maps defined on finite-dimensional matrix algebras. \\
Yu Yang, Denny H. Leung, Wai-Shing Tang \cite{YANG} proved that every $2$-positive linear map from $M_{3}(\mathbb{C})$ to $M_{3}(\mathbb{C})$ is decomposable. They asked whether this observation is true for linear maps from  $M_{3}(\mathbb{C})$ to $M_{4}(\mathbb{C})$. Indeed, we give an affirmative answer to this observation for the family of linear maps $\Phi_{\alpha, \beta,3}$ (corollary \ref{Yu}), that is, there exist values of parameters $\alpha$ and $\beta$ for which the family of linear maps $\Phi_{\alpha, \beta,3}$ is decomposable and $2$-positive. \\
It would be interesting to inquire about the indecomposability of linear maps $\Phi_{\alpha, \beta,n}$. This criterion can be helpful to find PPT entangled states, (More details can be found in \cite{doi:10.1142/S0129055X13300021}).}

\section{Acknowledgment}
\normalfont{The authors are indebted to Ivan Bardet for many useful discussions. GS would like to thank B.V.R Bhat for pointing out useful comments in corollary \ref{Yu}. HO was partially supported by KAKENHI Grant Number JP17K05285. BC's research was supported by ANR-14- CE25-0003. BC and GS were supported by JSPS Challenging Research grant 17K18734, JSPS Fund for the Promotion of Joint International Research grant 15KK0162, and JSPS wakate A grant 17H04823. GS would like to acknowledge FRIENDSHIP project of Japan International Corporation Agency (JICA) for research fellowship (D-15-90284).}

\bibliographystyle{ieeetr}
\bibliography{report}

\begin{thebibliography}{10}

\bibitem{bhat2011}
B.~V.~R. Bhat, ``Linear maps respecting unitary conjugation,'' {\em Banach J.
  Math. Anal}, vol.~5, no.~2, pp.~1--5, 2011.

\bibitem{YANG}
Y.~Yang, D.~H. Leung, and W.-S. Tang, ``{All 2-positive linear maps from
  {$M_3(\Bbb{C})$} to {$M_3(\Bbb{C})$} are decomposable},'' {\em Linear Algebra
  and its \\ Applications}, vol.~503, pp.~233--247, 2016.

\bibitem{STORMER20082303}
E.~St{\o}rmer, ``Separable states and positive maps,'' {\em Journal of
  Functional Analysis}, vol.~254, no.~8, pp.~2303 -- 2312, 2008.

\bibitem{doi:10.1142/S0129055X13300021}
S.-H. Kye, ``Facial structures for various notions of positivity and
  applications to the theory of entanglement,'' {\em Reviews in Mathematical
  Physics}, vol.~25, no.~02, p.~1330002, 2013.

\bibitem{choi1072positive}
M.-D. Choi, ``Positive linear maps on ${C}^{*}$-algebras,'' {\em Canadian
  Math.J,}, vol.~24, no.~3, pp.~520--529, 1972.

\bibitem{CHOI1975}
M.-D. Choi, ``Completely positive linear maps on complex matrices,'' {\em
  Linear Algebra and its Applications}, vol.~10, no.~3, pp.~285 -- 290, 1975.

\bibitem{Book1}
M.~A. Nielsen and I.~L. Chuang, {\em Quantum Computation and Quantum
  Information}.
\newblock Cambridge University Press, 2000.

\bibitem{Horodecki:1996nc}
H.~Michal, H.~Pawel, and H.~Ryszard, ``On the necessary and sufficient
  conditions for separability of mixed quantum states,'' {\em Phys. Lett},
  vol.~A223, pp.~1--8, 1996.

\bibitem{JAMIOLKOWSKI1972275}
A.~Jamiołkowski, ``Linear transformations which preserve trace and positive
  semidefiniteness of operators,'' {\em Reports on Mathematical Physics},
  vol.~3, no.~4, pp.~275 -- 278, 1972.

\bibitem{størmer1963}
E.~St{\o}rmer, ``Positive linear maps of operator algebras,'' {\em Springer
  Monographs in Mathematics}, 2013.

\end{thebibliography}

\noindent (Benoit Collins) Department of Mathematics, Graduate School of Science,
Kyoto University, Kyoto 606-8502, Japan \\
E-mail address: collins@math.kyoto-u.ac.jp\\

\noindent (Hiroyuki Osaka) Department of Mathematical Sciences. Ritsumeikan University, Kusatsu, Shiga 525-8577, Japan \\
E-mail address: osaka@se.ritsumei.ac.jp \\

\noindent (Gunjan Sapra) Department of Mathematics, Graduate School of Science,
Kyoto University, Kyoto 606-8502, Japan \\
E-mail address: gunjan18@math.kyoto-u.ac.jp

\end{document}